






\documentclass{elsart}               


\usepackage{graphicx}          
\usepackage{natbib} 

\usepackage{amsmath,amssymb,color}

\newtheorem{theorem}{Theorem}

\newtheorem{lemma}[theorem]{Lemma}
\newtheorem{proposition}[theorem]{Proposition}
\newtheorem{definition}{Definition}
\newtheorem{remark}{Remark}
\newtheorem{example}{Example}
\newtheorem{assumption}{Assumption}

\DeclareMathOperator{\interior}{int}
\newcommand{\inte}[1]{\interior{\mbb{#1}}}

\newcommand{\mbb}[1]{\mathbb #1}

\newcommand{\mcl}[1]{\mathcal #1}

\newcommand{\Rnx}{\mbb{R}^{n_x}}

\newcommand{\Rnu}{\mbb{R}^{n_u}}

\newcommand{\R}{\mbb{R}}

\definecolor{darkblue}{rgb}{0.4, 0.0, 0.1}
\def\TF{\textcolor{black}}

\makeatletter
\DeclareRobustCommand{\qed}{%
  \ifmmode 
  \else \leavevmode\unskip\penalty9999 \hbox{}\nobreak\hfill
  \fi
  \quad\hbox{\qedsymbol}}
\newcommand{\openbox}{\leavevmode
  \hbox to.77778em{%
  \hfil\vrule
  \vbox to.675em{\hrule width.6em\vfil\hrule}%
  \vrule\hfil}}
\newcommand{\qedsymbol}{\openbox}
\newenvironment{proof}[1][\proofname]{\par
  \normalfont
  \topsep6\p@\@plus6\p@ \trivlist
  \item[\hskip\labelsep\itshape
    #1.]\ignorespaces
}{%
  \qed\endtrivlist
}
\newcommand{\proofname}{Proof}
\makeatother
\date{2021}

\begin{document}

\begin{frontmatter}

\title{On Continuous-Time Infinite Horizon Optimal Control -- Dissipativity, Stability and Transversality\thanksref{footnoteinfo}} 

\thanks[footnoteinfo]{This paper was not presented at any IFAC 
meeting. Corresponding author T.~Faulwasser. Tel. +49 231 755 2359. 
Fax +49 231 755 2694.}

\author[Paestum]{Timm Faulwasser}\ead{timm.faulwasser@ieee.org},    
\author[Rome]{Christopher M. Kellett}\ead{chris.kellett@anu.edu.au }               

\address[Paestum]{Institute for Energy Systems, Energy Efficiency and Energy Economics, TU Dortmund University, Emil-Figge-Str. 70, 44227 Dortmund, Germany.}  
\address[Rome]{Research School of Electrical, Energy, and Materials Engineering at the Australian National University, Canberra, ACT, Australia.}             

\begin{keyword}                           
Dissipativity, Optimal Control, Stability, Transversality Conditions, Turnpikes, HJBE             
\end{keyword}                             

 \begin{abstract}                          
This paper analyses the interplay between dissipativity and stability properties in continuous-time infinite-horizon Optimal Control Problems (OCPs). We establish several relations between these properties, which culminate in a set of equivalence conditions. Moreover, we investigate convergence and stability of the infinite-horizon optimal adjoint trajectories. The workhorse for our investigations is a notion of strict dissipativity in OCPs, which has been coined in the context of economic model predictive control. With respect to the link between stability and dissipativity, the present paper can be seen as an extension of the seminal work on least squares optimal control by Willems from 1971.
Furthermore, we show that strict dissipativity provides a conclusive answer to the question of adjoint transversality conditions in infinite-horizon optimal control which has been raised by Halkin in 1974. Put differently, we establish conditions under which the adjoints converge to their optimal steady state value. We draw upon several examples to illustrate our findings. Moreover, we discuss the relation of our findings to \TF{results available in the literature}. 
\end{abstract}

\end{frontmatter}

\section{Introduction}
Arguably, the three most impactful concepts in systems and control in the 20th century have been the optimal control siblings---i.e. the Pontryagin Maximum Principle (PMP) \citep{Boltyanskii60a} and the Hamilton-Jacobi-Bellman  approach \citep{Bellman54a}---as well as the dissipativity notion for dynamic systems coined by \cite{Willems71a,Willems72a,Willems72b}.\footnote{Indeed the origins of optimal control theory can be traced back at least for another 300 years to the 17th century. For overviews of the historical development of optimal control theory see \citep{Sussmann97,Pesch13a}.} The intricate relations between the latter two concepts and stability properties of dynamic systems have been at the core of a number of seminal contributions in systems and control, see e.g. \citep{Kalman60a,Willems71a} or \citep{Moylan73a,Hill76a}.
Moreover, one can regard the manifold developments on Model Predictive Control (MPC) as an industrially successful attempt to overcome the difficulties of solving the Hamilton-Jacobi-Bellman Equation (HJBE) for closed-loop optimal controls by instead resorting to a receding horizon application of open-loop optimal controls \citep{	Rawlings17}---either obtained  via the PMP \citep{Kapernick14a} or via direct solution methods for optimal control \citep{Houska11a}. \vspace*{-3mm}

As a matter of fact, recent developments on MPC rely heavily on dissipativity notions of Optimal Control Problems (OCPs), see, e.g., \citep{Angeli12a,Mueller14a,Gruene13a,epfl:faulwasser15g}. Specifically, these developments are driven by the need to consider stage costs---i.e. Lagrange terms in the language of optimal control---beyond the established convex quadratic functions, which also goes under the label of \textit{economic MPC}, see \citep{kit:faulwasser18c} for a recent overview. A main driver for the development of these generalized MPC schemes have been so-called turnpike properties of OCPs, which are in essence similarity properties of OCPs \TF{parametric in the initial condition and the horizon length} \citep{Trelat15a,tudo:faulwasser21b}. While the term turnpike property was coined by \cite{Dorfman58}  and has received considerable attention in economics \citep{Carlson91}, it was not of significant interest in MPC until \citep{Rawlings09b,Gruene13a}. Indeed it can be shown that turnpike and dissipativity properties of finite-horizon OCPs are closely related and, under mild assumptions, equivalent \citep{Gruene16a,epfl:faulwasser15h}.\vspace*{-2mm}

In the present paper, we do not investigate MPC. Rather we are interested in analyzing the interplay between dissipativity of  infinite-horizon OCPs and the stability of the considered dynamics under optimal infinite-horizon controls. Put differently, we exploit dissipativity concepts to establish a relation between the PMP and stability properties of optimally controlled systems. We show that under mild assumptions asymptotic stability of the state  and control variables (i.e. primal variables) is equivalent to strict dissipativity of the underlying OCPs. Moreover, we also extend our analysis to the (dual) adjoint/co-state variables of the OCP. We establish a set of conditions showing equivalence of dissipativity of an OCP and the stability of (primal and dual)  optimal infinite-horizon trajectories. Indeed, one may regard the present paper as an extension of the classic linear-quadratic analysis by \cite{Willems71a} to non-linear systems with generic cost functions and subject to constraints. \vspace*{-2mm}

Finally---in some sense as a by-product of our analysis and, in another sense, one of the key findings of this paper---we show that strict dissipativity properties of an OCP allow conclusively answering the question for adjoint transversality conditions of infinite-horizon OCPs, an open problem since the seminal paper of  \cite{Halkin74a}.
Specifically, we show that whenever the considered OCP is strictly dissipative then the optimal adjoint will converge to its steady-state value, which can be different from $0$ and corresponds to the optimal Lagrange multiplier of a corresponding steady-state optimization problem. Since Halkin's counterexamples, there have been different approaches to infinite-horizon transversality conditions. The findings of \cite{Pickenhain06,Pickenhain10a} rely on weighted Banach spaces (i.e. discounted objectives), \cite{Weber06a} considers exponentially discounted objectives to derive transversality bounds, while \cite{Cartigny03a} require structural properties to enforce boundedness of the infinite-horizon objective. 
Our approach structurally differs from these works as we rely on  strict dissipativity of optimal solutions, which enables us to show strong optimality (i.e. finiteness of the optimal value function) without discounting by a simple shift of the stage cost, which in turn alters neither primal nor dual optimal solutions. \vspace*{-2mm}

The remainder is structured as follows:
 Section \ref{sec:problem} presents the problem at  hand and recalls optimality conditions and dissipativity inequalities, while Section \ref{sec:results} presents the main results in the following order: primal attractivity and stability, converse dissipativity results, adjoint stability and transversality conditions, and equivalence conditions. Owing to the widespread investigations on and applications of  dissipativity, stability, and optimal control in the literature, we deviate from the customary contextualization of our results in the introduction. Instead Section \ref{sec:discussion} discusses  our results and puts them in context to related topics, such as e.g. viscosity solutions of the HJBE. Finally, the paper ends with brief conclusions in Section \ref{sec:conclusion}.

\section{Problem statement} \label{sec:problem}
We investigate time-invariant (finite or infinite-horizon)  OCPs in Lagrange form given by
\begin{subequations} \label{eq:OCP}
\begin{align}
V_{T}(0,x_0) \doteq &\inf_{u(\cdot) \in \mathcal{L}^\infty([0,T], \Rnu) } \int_0^{T}
\hspace*{-2mm}\ell(x(t), u(t)) \mathrm{d}t  \label{eq:OCP_obj}\\
\text{subject to }& \nonumber \\
\dfrac{\mathrm{d} x}{\mathrm{d}t} &= f(x(t),u(t)), ~ x(0) = x_0 
\label{eq:OCP_sys}\\
0 &\geq g_i(x(t), u(t)), ~ i = 1 \dots n_g \label{eq:OCP_con}
\end{align}
\end{subequations}
where $T \in \R^+\cup\infty$.
The dynamics  $f:\mathbb{R}^{n_x}\times \mathbb{R}^{n_u} \to \mathbb{R}^{n_x}$, the stage cost $\ell:\mathbb{R}^{n_x}\times \mathbb{R}^{n_u} \to \mathbb{R}$, and the mixed input-path constraints $g_i:\mathbb{R}^{n_x}\times \mathbb{R}^{n_u} \to \mathbb{R}, i = 1 \dots n_g$ are at least twice continuously differentiable.
Occasionally, we denote the constraint set defined by \eqref{eq:OCP_con} as
\begin{equation}\label{eq:Z}
\mbb{Z} \doteq \left\{(x,u) \in \mbb{R}^{n_x+n_u}\,|\, g_i(x,u)\leq 0, i=1\dots n_g\right\}.
\end{equation}
The projection of $\mbb{Z}$ onto $\Rnx$ is denoted by $\mbb{X} \doteq \Pi_x(\mbb{Z})$, and the projection onto $\Rnu$ is written as $\mbb{U} \doteq \Pi_u(\mbb{Z})$.

We assume that for admissible inputs $u(\cdot) \in$ $\mathcal{L}^\infty([0,T],$ $\Rnu)$ satisfying the constraints, the dynamics  \eqref{eq:OCP_sys} have a unique absolutely continuous solution. 
 Moreover, we suppose that for all initial conditions of interest, i.e. $x_0 \in \mathbb{X}_0 \subseteq \mathbb{R}^{n_x}$, an optimal solution exists. 
 Note, at this point, we still need to comment on the specific optimality concept (strong or overtaking optimality) employed, as in the infinite-horizon case the performance functional \eqref{eq:OCP_obj} might be unbounded, cf. \citep{Carlson91} and Lemma \ref{lem:strgOpt} below.
We denote optimal pairs as 
\[z^\star(\cdot, x_0) \doteq (u^\star(\cdot, x_0),\, x^\star(\cdot, x_0)),
\]
 where the argument $x_0$ is used to denote the specific initial condition. Whenever necessary, we use $x^\star(\cdot, x_0, u^\star(\cdot, x_0))$ to highlight the considered input trajectory. \vspace*{-2mm}
 
As a shorthand for the infinite-horizon variant of \eqref{eq:OCP} we use OCP$_\infty(x_0)$, which highlights the considered horizon length and the initial condition $x_0$. Similarly, OCP$_T(x_0)$ refers to the finite horizon variant of \eqref{eq:OCP}. Any variable related to  OCP$_T(x_0)$ will be indicated by subscript $(\cdot)_T$ whenever necessary.

Subsequently, we investigate the stability of the dynamics \eqref{eq:OCP_sys} under the open-loop infinite-horizon optimal control
$u^\star: \mathbb{R}_0^+\times \mathbb{X}_0 \to \mathbb{R}^{n_u}$, i.e. 
\begin{equation} \label{eq:sys}
\dot x = f(x, u^\star(t, x_0)) \doteq f^\star(t, x), \quad x_0 \in \mathbb{X}_0.
\tag{$\Sigma$}
\end{equation} 
\TF{Temporarily assume} that the optimal control $u^\star(\cdot, x_0)$ of OCP$_\infty(x_0)$ is unique almost everywhere, then we know from Bellman's principle of optimality that the truncation of $u^\star(\cdot, x_0)$ to the time horizon $[\delta, \infty)$ is optimal for OCP$_\infty(x_\delta)$ with $x_\delta \doteq x^\star(\delta, x_0, u^\star(\cdot, x_0))$. 
Hence, 
the dynamics \eqref{eq:sys} have the following (semigroup) property
\begin{equation} \label{eq:semigroup_prop}
x(t+\delta, x_0, u^\star(\cdot, x_0)) = x(t, x_\delta, u^\star(\cdot, x_\delta)).\vspace*{-1.5mm}
\end{equation}

\begin{remark}[Non-unique optimal solutions]~\\
Evidently, it is restrictive to assume that OCP$_\infty(x_0)$ admits a.e. unique optimal solutions. If, for some $x_0 \in \mathbb{X}_0$, there exist multiple optimal solutions, we choose one of them at $t=0$ and stick to the corresponding optimal input on $[0, \infty)$. This way, the system \eqref{eq:sys} is uniquely defined. 
\TF{Moreover, our subsequent results hold as long one as one does not switch from one optimal input to another. Thus non-uniqueness of optimal solutions does not pose further issues.}
\end{remark}

\subsection{Necessary conditions of optimality}
To handle the mixed input-state (path) constraints \eqref{eq:OCP_con}, we consider a direct-adjoining approach via the 
Hamiltonian $H:\mathbb{R}\times \mathbb{R}^{n_x}\times \mathbb{R}^{n_g} \times \mathbb{R}^{n_x} \times \mathbb{R}^{n_u} \to \mathbb{R}$
\begin{equation}
\label{eq:H}
H(\lambda_0, \lambda, \mu, x, u) \doteq \lambda_0 \ell(x,u) + 
\begin{bmatrix}
\lambda \\ \mu
\end{bmatrix}^\top
\begin{bmatrix}
f(x,u) \\ g(x,u)
\end{bmatrix}
\end{equation}
The gradients of $H$ with respect to $x,u,\lambda$ are written as $H_x, H_u, H_\lambda$, respectively.\footnote{Occasionally, we will also use $\nabla W$ to denote the gradient of functions $W:\Rnx\to \R$.}

We exclude abnormal problems and hence we normalize $\lambda^\star_0 =1$. Applying the  PMP, first-order necessary conditions of optimality are given by
\begin{subequations} \label{eq:NCO}
\begin{align}
\dot x^\star &=\phantom{-} H_\lambda, \quad x^\star(0) = x_0, \label{eq:NCO_x}\\
\dot \lambda^\star &= - H_x,  \label{eq:NCO_adjoint}\\
0{\phantom{^\star}} & = \phantom{-}H_u, \label{eq:NCO_u}\\
H(\lambda_0^\star, \lambda^\star, \mu^\star, x^\star, u^\star)  &= \min_{u\in \mbb{U}} H(\lambda_0^\star, \lambda^\star, \mu^\star, x^\star, u). 
\end{align}
The conditions above are augmented by 
\begin{align}
\hspace*{-6mm}H(\lambda_0^\star, \lambda^\star, \mu^\star, x^\star, u^\star) &= \begin{cases} 0\phantom{onst}  ~\text{ if } T = \infty \\ 
 const~ \text{ if } T < \infty
\end{cases} \label{eq:HamiltonZero}\\
\lambda^\star(T) & = 0 \quad\,\phantom{onst}\text{ if } T < \infty \label{eq:transversalityAdjoint} \\
(\mu^\star)^\top g(x^\star, u^\star) = 0,~ \mu^\star(t)  &\geq 0 \quad ~~ \phantom{onst}t\in [0, T),
\end{align}
\end{subequations}
and the usual non-triviality requirement that the adjoints may not vanish simultaneously.
Moreover, whenever $V_T \in \mcl{C}^1$ and no state constraints are active, we have that
\begin{equation} \label{eq:AdjointNablaV}
\nabla V_{T}(t,x^\star_T(t, x_0)) = \lambda^\star_T(t, x_0).
\end{equation}
Here the gradient of the optimal value function for the truncated horizon $[t, T]$ evaluated at $x^\star_T(t, x_0)$ is denoted as $\nabla V_{T}(t, x^\star_T(t, x_0))$. The gradient  equals the adjoint $ \lambda^\star_T(t, x_0)$ at time $t$ \citep{Liberzon12}. Whenever no confusion can arise the time argument of $V_T(t,x)$ is dropped.
\begin{remark}[Constraint qualifications in OCPs]
We remark that along optimal solutions one requires the mixed input-state constraints to be regular (i.e. linearly independent and of full rank with respect to $u$)  and the existence of multiplier trajectories $\mu(\cdot)$. Hence our standing assumption is that \eqref{eq:NCO} hold for  $u^\star(\cdot) \in \mathcal{L}^\infty([0,T], \Rnu)$, 
$x^\star(\cdot)$ absolutely continuous, and  $\lambda^\star(\cdot), \mu^\star(\cdot)$ are piecewise absolutely continuous. Notice that whenever no state constraints are present, i.e. $g(x,u)$ does not depend on $x$, this is not a severe restriction, see \cite{Hartl95}. Indeed we could drop \eqref{eq:NCO_u} and the multiplier $\mu$ and work with the usual Hamiltonian instead.  Alternatively, one could consider optimality conditions formulated in terms of bounded variation. Here, we restrict the discussion to the more easily accessible case of \eqref{eq:NCO}, which allows to highlight structural properties. For an overview and discussion of these and further necessary conditions we refer to \cite{Hartl95}.
\end{remark}

Moreover, it is worth noting that the steady-state variant of the optimality system \eqref{eq:NCO}
\begin{subequations} \label{eq:KKT}
\begin{equation}
0 = H_\lambda, \quad 0 = H_x, \quad 0 = H_u 
\end{equation}
combined with 
\begin{equation}
\mu \geq 0 \quad \text{ and } \quad  \mu^\top g = 0
\end{equation}
\end{subequations}
specifies the KKT conditions of the following steady-state optimization problem
\begin{subequations} \label{eq:SOP}
\begin{align}
\min_{(x,u) \in \mathbb{R}^{n_x + n_u}} ~ &\ell(x,u) \\
\text{ subject to } \nonumber \\
0 &=f(x,u), \\
0 &\geq g_i(x,u), ~  i = 1 \dots n_g. 
\end{align}
\end{subequations}
Optimal variables at steady state are denoted by $\bar\cdot$. Similarly to before, we use the shorthand $\bar z = (\bar x, \bar u)^\top$ to denote the optimal steady state.\vspace{2mm}

Observe \eqref{eq:NCO} does not specify a boundary condition for the adjoints $\lambda$ in the optimality conditions for $T = \infty$.  Indeed as the next classical example shows, the usual (finite horizon) transversality condition \eqref{eq:transversalityAdjoint} does not necessarily hold asymptotically in the infinite horizon case.

\begin{example}[ The example of \cite{Halkin74a}]\label{ex:halkin}
Consider OCP$_T(x_0)$ with 
$$-\ell(x,u) = (1-x)u = f(x,u), \qquad \dim x = \dim u = 1,$$ 
input constraint $\mbb{U} = [0,1]$, and horizon $T = \infty$.
It can be shown that the optimal solution is $u^\star(t) \equiv 1$ \cite[Chap. 2.4]{Carlson91}. This implies that the adjoint reads
$\lambda^\star(t) = (\lambda^\star(0) +\lambda^\star_0)e^{t}-\lambda^\star_0$. 
Upon normalization of $ -\lambda^\star_0 \doteq \lambda^\star(0)$ we obtain $\lambda^\star(t) \equiv \lambda^\star_0\not = 0$, which clearly differs from $\displaystyle \lim_{t\to\infty} \lambda^\star(t) = 0$. 
\end{example} 

The next example is taken from \cite{Carlson91,Cliff73}, wherein a finite-horizon variant is considered. It also shows the difficulties surrounding the transversality condition of the adjoints in the infinite-horizon case, and  illustrates the tight relation between OCP$_T(x_0)$ and the corresponding steady-state problem \eqref{eq:SOP}. 
\begin{example}[Optimal fish harvest]\label{ex:fish}
Consider the dynamics and stage cost
\begin{align*}
f(x,u) &= \,\phantom{-}x(x_s -x -u), \quad x_s > 0 \\
\ell(x,u) & = -ax -bu +cxu, \quad a,b,c > 0
\end{align*}  
with data $\mbb{U} = [0,\hat u]$ and $\mbb{X} = [\varepsilon, \hat x]$ and $T = \infty$.
Consider the  steady state  $
 \bar x = \frac{c x_s + b -a}{2c} 
 $
 with data $a,b,c$ such that $\bar x \in [\varepsilon, x_s]$, 
 The optimal closed-loop control is given by 
\[
u^\star(x) = \begin{cases} 0 & \text{ if } x < \bar x \\
\hat u & \text{ if } x > \bar x \\
 x_s - \bar x & \text{ if } x = \bar x, 
\end{cases}
\]
where we assume that $a,b,c$ are such that $\bar u \in (0, \hat u)$. 
Moreover, it can be shown that 
$\lim_{t \to \infty} x^\star(t) = \bar x$, that $\lim_{t \to \infty} u^\star(t) = \bar u$,
and that 
\[
\lim_{t \to \infty} \lambda^\star(t) = \bar \lambda = c - \dfrac{b}{\bar x}.
\]
 Observe that, for suitable values of the parameters $a,b,c$ and $x_s$ $(\bar  x, \bar u, \bar \lambda, 0)$ constitutes a KKT point and global minimizer of 
\begin{align*}
\min_{x, u}~&-ax -bx + cxu \\
\text{subject to }& \\
0 &= x(x_s-x-u), \quad 
u \in [0, \hat u] \text{ and } x \in [\varepsilon, \hat x].\vspace*{-2mm}
\end{align*} 
Details of the derivation for the finite-horizon case can be found in \cite[Chap. 3.3]{Carlson91}.
\end{example}

\subsection{Dissipativity of OCPs}
We are interested in analyzing OCP$_\infty(x_0)$ 	under the following dissipativity assumption:
\begin{definition}[Dissipativity of OCP$_T(x_0)$]\label{def:diss}~\\
OCP$_T(x_0)$  \eqref{eq:OCP} is said to be \textit{dissipative with respect to $ \bar z = (\bar x, \bar u)$} if 
there exists a non-negative storage function $S:\mathbb{X} \to \mathbb{R}_0^+$ such that for all $x_0 \in \mathbb{X}_0$, all $T\geq 0$, and along all optimal pairs $z^\star(\cdot, x_0)$ of \eqref{eq:OCP_con} for all $t_1 \in [0,T]$ 
\begin{equation}\label{eq:diss}
S(x^\star(t_1)) - S(x_0) \leq \int_{0}^{t_1} \ell(z^\star(t)) -  \ell(\bar z)\,\mathrm{d}t \tag{DI}
 \end{equation}
holds, where  $x^\star(t_1) = x^\star(t_1, x_0, u^\star(\cdot))$. \\
If, in addition there exists $\alpha_\ell \in \mathcal{K}_\infty$ such that
\begin{multline} \label{eq:strDI}
S(x^\star(t_1)) - S(x_0) \leq \int_{0}^{t_1} -\alpha_\ell\left(\left\|z^\star(t)-\bar z\right\|\right) 
+ \ell(z^\star(t))-  \ell(\bar z)\,\mathrm{d}t, \tag{sDI}
\end{multline}
then OCP$_T(x_0)$ from  \eqref{eq:OCP} is said to be \textit{strictly dissipative with respect to $ \bar z = (\bar x, \bar u)$}. 
 \end{definition}
 It is easy to see that whenever strict dissipativity holds the steady state pair $(\bar x, \bar u)$ in \eqref{eq:strDI} is the unique global minimizer of \eqref{eq:SOP}. Henceforth, without loss of generality, we set $ \ell(\bar z) = \ell(\bar x, \bar u) = 0$. Note that swapping $\ell(x,u)$ with $\ell(x,u) -\ell(\bar z)$  neither affects the optimality of primal lifts $z^\star(\cdot,x_0)$ nor that of the optimal duals $\lambda^\star(\cdot,x_0)$ and $\mu^\star(\cdot,x_0)$. However, as we will see in Lemma \ref{lem:strgOpt}, this trick affects boundedness of $V_\infty$. \vspace{-2mm}
 
A classical characterization of dissipativity is given by the available storage \citep{Willems72a}. Let $\mathcal{U}^\star_T(x_0)$ denote the set of all optimal input trajectories of OCP$_T(x_0)$ for a given horizon length $T$ and initial condition $x_0$. In case of strict dissipativity of OCP$_T(x_0)$ and assuming w.l.o.g. $\ell(\bar z) =0$, the available storage is given by 
\begin{subequations} \label{eq:strAvailStorage}
\begin{align}
S^a_{\alpha,\ell}(x_0) \doteq \sup_{T} &\,\int_0^{T}
	\alpha_\ell\left(\left\|z(t)-\bar z\right\|\right) -\ell(z(t)) \mathrm{d}t   \label{eq:Sa_obj}	\\
\text{subject to }& \nonumber \\	
\dfrac{\mathrm{d} x}{\mathrm{d}t} &= f(x(t),u(t)), ~ x(0) = x_0 \\
u(\cdot) &\in \mathcal{U}^\star_T(x_0),
\end{align}
\end{subequations}
where the control signals are restricted to be optimal in OCP$_T(x_0)$.
Strict dissipativity of OCP$_T(x_0)$  in the sense of  \eqref{eq:strDI} is equivalent to $S^a_{\alpha,\ell}(x_0) <\infty$ for all $x_0 \in \mathbb{X}_0$ \citep{Willems72a}. 
The available storage for non-strict dissipativity based on \eqref{eq:diss} is given by
\begin{subequations} \label{eq:availStorage}
\begin{align}
S^a_{\ell}(x_0) \doteq \sup_{T} &\,\int_0^{T}
 -\ell(z(t)) \mathrm{d}t\\
\text{subject to }& \nonumber \\
\dfrac{\mathrm{d} x}{\mathrm{d}t} &= f(x(t),u(t)), ~ x(0) = x_0 \\
u(\cdot) &\in \mathcal{U}^\star_T(x_0),
\end{align}
\end{subequations}
Note that strict dissipativity implies dissipativity. 

 \begin{remark}[Dissipativity notions for OCPs]~\\
 We remark that there exist slightly differing dissipativity notions for OCPs. Some works consider dissipation inequalities to hold for all $(x,u) \in \mbb{Z}$ \citep{Mueller14a}. Other works \citep{epfl:faulwasser15g,epfl:faulwasser15h,kit:faulwasser18e_2} require dissipativity only along optimal solutions, which is a slightly weaker requirement. Moreover, here we consider strictness in $x$ and $u$, while occasionally strictness in $x$ is used, see \citep{Angeli12a,Mueller14a}.\vspace*{-1mm}
 \end{remark}

\section{Results} \label{sec:results}
We present our result first for the primal variables $x, u$ and then we shift to the dual/adjoint variables $\lambda, \mu$.
\subsection{Primal attractivity and stability}~\vspace*{-12mm}
\begin{assumption}[Exponential cost bound] \label{ass:reach}~\\
For all $x_0 \in \mathbb{X}_0$, there exists an admissible infinite-horizon control $\tilde u:[0, \infty) \to \mathbb{R}^{n_u}$ and constants $C(x_0) >0, \rho >0$
such that the feasible suboptimal pair $\tilde z(\cdot, x_0) = (\tilde x(\cdot, x_0, \tilde u(\cdot), ~ \tilde u(\cdot))$ satisfies
\[
\ell(\tilde z(t, x_0)) \leq C(x_0) e^{-\rho t} 
\]
and \TF{$C(x_0)$ is finite on any compact subset of $\mbb{X}_0$}.
\end{assumption}
\begin{lemma}[\eqref{eq:diss} $\Rightarrow$ $V_\infty(x) >- \infty$] \label{lem:strgOpt}
For all $ x_0 \in \mathbb{X}_0$, let OCP$_T(x_0)$  be dissipative with respect to $\bar z = (\bar x, \bar u)^\top$, $\ell(\bar z) = 0$, and let Assumption \ref{ass:reach} hold. Then, there exists a constant  $\overline{v}>0$ such that \TF{on any compact subset of $\mathbb{X}_0$
\[
-\infty <  V_T(x_0) \leq \overline v < \infty,\qquad \forall T\in \mbb{R}^+\cup \infty.  \vspace*{-3mm}
\]
holds  and $\displaystyle -\infty <  \inf_{T\in \mbb{R}^+} V_T(x_0)$ holds on  $\mathbb{X}_0$.}
\end{lemma}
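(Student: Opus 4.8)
The plan is to sandwich $V_T(x_0)$ between a finite lower bound coming from dissipativity and a finite upper bound coming from Assumption~\ref{ass:reach}, handling all horizons $T \in \mathbb{R}^+ \cup \infty$ simultaneously.

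First I would extract the lower bound from \eqref{eq:diss}. Along the optimal pair $z^\star(\cdot,x_0)$ of OCP$_T(x_0)$, using the normalization $\ell(\bar z)=0$ and non-negativity of the storage $S:\mathbb{X}\to\mathbb{R}_0^+$, the choice $t_1 = T$ yields
\[
V_T(x_0) = \int_0^T \ell(z^\star(t))\,\mathrm{d}t \;\geq\; S(x^\star(T)) - S(x_0) \;\geq\; -S(x_0) \;>\; -\infty .
\]
As this holds for every finite $T$, passing to the infimum gives $\inf_{T\in\mathbb{R}^+} V_T(x_0) \geq -S(x_0) > -\infty$ at each $x_0 \in \mathbb{X}_0$, which is the second assertion.

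Next I would obtain the upper bound by testing the objective against the feasible suboptimal control $\tilde u$ of Assumption~\ref{ass:reach}. Since $V_T(x_0)$ is an infimum over admissible inputs and $\ell(\tilde z(t,x_0)) \leq C(x_0)e^{-\rho t}$,
\[
V_T(x_0) \;\leq\; \int_0^T \ell(\tilde z(t,x_0))\,\mathrm{d}t \;\leq\; \int_0^\infty C(x_0)e^{-\rho t}\,\mathrm{d}t \;=\; \frac{C(x_0)}{\rho}.
\]
On any compact $K \subseteq \mathbb{X}_0$, finiteness of $C(\cdot)$ permits the uniform choice $\overline v \doteq \rho^{-1}\sup_{x_0\in K} C(x_0) < \infty$, and this bound is valid for every $T$, including $T=\infty$, because $\tilde u$ is an admissible infinite-horizon control.

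The remaining effort is the infinite horizon. The displayed upper bound shows the integral along $\tilde z$ converges, so at least one admissible control produces a finite infinite-horizon cost; this is exactly where the shift $\ell \mapsto \ell - \ell(\bar z)$ is essential, as it replaces an otherwise linearly growing bound by the constant $\overline v$ and renders strong optimality well-posed. To close the lower bound for $T=\infty$, I would apply \eqref{eq:diss} along the infinite-horizon optimal pair at each $t_1$, giving $\int_0^{t_1}\ell(z^\star_\infty(t))\,\mathrm{d}t \geq -S(x_0)$, and then let $t_1\to\infty$. I expect the main obstacle to lie precisely in this limiting step: one must confirm that, under the chosen optimality concept, the optimal infinite-horizon integral converges to $V_\infty(x_0)$ (rather than merely staying bounded), so that $V_\infty(x_0)$ inherits the two-sided bound $-S(x_0) \leq V_\infty(x_0) \leq \overline v$.
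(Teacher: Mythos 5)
Your proof is correct and takes essentially the same route as the paper's: sandwiching $V_T$ between a dissipativity-based lower bound and the upper bound obtained by testing the infimum against the suboptimal control $\tilde u$ of Assumption 1, uniformly on compact subsets. The only difference is bookkeeping --- where you derive $V_T(x_0)\geq -S(x_0)$ directly from \eqref{eq:diss} with $t_1=T$ and $S\geq 0$, the paper invokes the available-storage characterization $S^a_{\ell}(x_0)=\sup_T\, -V_T(x_0)<\infty$, whose finiteness rests on exactly that inequality; and your closing caveat about identifying $\lim_{t_1\to\infty}\int_0^{t_1}\ell(z^\star_\infty(t))\,\mathrm{d}t$ with $V_\infty(x_0)$ is likewise left implicit in the paper's step $S^a_{\ell}(x)\geq -V_\infty(x_0)$.
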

\begin{proof}
The dissipativity characterization via the available storage \eqref{eq:availStorage} implies
\[
\infty > S^a_\ell(x) = \sup_T -\int_0^T \ell(z^\star_T(t,x_0))\mathrm{d}t  \\= \sup_T -V_T(x_0)
\]
\TF{Hence dissipativity gives $-\infty <  \inf_{T\in \mbb{R}^+} V_T(x_0)$ on  $\mathbb{X}_0$ and 
\[S^a_\ell(x) \geq  -\int_0^\infty \ell(z_\infty^\star(t,x_0))\mathrm{d}t = -V_\infty(x_0)\]} 
\TF{ Moreover, Assumption \ref{ass:reach} implies for all $T\in\mbb{R}^+\cup\infty$
$
\int_0^T \ell(z_T^\star(t, x_0)) \mathrm{d}t \leq \int_0^T \ell(\tilde z(t, x_0))\mathrm{d}t  \leq \tilde J(x_0) <\infty$.
Hence  we have $-\infty <  V_T(x_0) \leq \tilde J$ for all $T\in \mbb{R}^+\cup \infty$ on any compact subset of $\mathbb{X}_0$.}
\end{proof}
The insight obtained from the above lemma is that dissipativity of OCP$_\infty(x_0)$ 
implies strong optimality. Hence, we do not need to
resort to more general concepts such as \textit{weak or strong  overtaking} optimality \citep{Carlson91}. Also observe that the above proof does not hinge on strictness of dissipativity. 

\begin{theorem}[\eqref{eq:strDI} $\Rightarrow$ primal attractivity]\label{thm:attractivity}~\\
For all $ x_0 \in \mathbb{X}_0$, let OCP$_T(x_0)$  be strictly dissipative with respect to $\bar z = (\bar x, \bar u)^\top$ and let Assumption \ref{ass:reach} hold. 
Then, for all $ x_0 \in \mathbb{X}_0$, the solutions of \eqref{eq:sys} satisfy 
\[
\lim_{t\to\infty} x(t, x_0, u^\star(\cdot, x_0)) = \bar x. 
\]
Furthermore, if there exists an optimal infinite-horizon input $u^\star(\cdot, x_0)$ absolutely continuous on $[0, \infty)$, then
\[
\lim_{t\to\infty} u^\star(t, x_0) = \bar u.  \vspace{-3mm}
\]
\end{theorem}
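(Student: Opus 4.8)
The plan is to convert strict dissipativity into an $L^1$ bound on the deviation of the optimal pair from $\bar z$ and then upgrade this to a pointwise limit by a Barbalat-type argument.

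First, I would evaluate \eqref{eq:strDI} along the infinite-horizon optimal pair $z^\star(\cdot,x_0)$ (recall $\ell(\bar z)=0$), which yields for every $t_1\ge 0$
\[
\int_0^{t_1}\alpha_\ell\!\left(\|z^\star(t)-\bar z\|\right)\mathrm{d}t \le \int_0^{t_1}\ell(z^\star(t))\,\mathrm{d}t + S(x_0) - S(x^\star(t_1)).
\]
By the principle of optimality (cf. \eqref{eq:semigroup_prop}) the tail of $z^\star$ is optimal for OCP$_\infty(x^\star(t_1))$, so applying \eqref{eq:diss} to the shifted problem on $[t_1,t_2]$ and letting $t_2\to\infty$ gives $\int_{t_1}^\infty \ell(z^\star)\,\mathrm{d}t \ge -S(x^\star(t_1))$, hence $\int_0^{t_1}\ell(z^\star)\,\mathrm{d}t \le V_\infty(x_0)+S(x^\star(t_1))$. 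Substituting this bound, the term $S(x^\star(t_1))$ cancels and I obtain the uniform estimate $\int_0^{t_1}\alpha_\ell(\|z^\star(t)-\bar z\|)\,\mathrm{d}t \le V_\infty(x_0)+S(x_0)$, where $V_\infty(x_0)<\infty$ by Lemma \ref{lem:strgOpt} (which applies since strict dissipativity implies dissipativity and Assumption \ref{ass:reach} holds). As the integrand is non-negative, the partial integrals are monotone and bounded, so $\int_0^\infty \alpha_\ell(\|z^\star(t)-\bar z\|)\,\mathrm{d}t<\infty$.

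Second, since $\alpha_\ell\in\mathcal{K}_\infty$ is increasing and $\|x^\star-\bar x\|\le\|z^\star-\bar z\|$, the same bound gives $\int_0^\infty \alpha_\ell(\|x^\star(t)-\bar x\|)\,\mathrm{d}t<\infty$. Integrability by itself only forces $\liminf_{t\to\infty}\|x^\star(t)-\bar x\|=0$, so to reach the claimed limit I invoke Barbalat's lemma. I first argue that $x^\star$ stays in a compact set: the super-level sets $\{t:\|x^\star(t)-\bar x\|>\varepsilon\}$ have finite measure for every $\varepsilon>0$, which together with the dynamics \eqref{eq:sys} confines the trajectory (in the examples this is immediate from compactness of $\mathbb{X}$ and $\mathbb{U}$). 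Then $\dot x^\star=f(x^\star,u^\star)$ is bounded, $x^\star$ is uniformly continuous, and composing with the continuous norm and with $\alpha_\ell$ renders $t\mapsto\alpha_\ell(\|x^\star(t)-\bar x\|)$ uniformly continuous. A non-negative, integrable, uniformly continuous function tends to $0$, whence $\|x^\star(t)-\bar x\|\to 0$, i.e. $x^\star(t)\to\bar x$.

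Finally, $\|u^\star-\bar u\|\le\|z^\star-\bar z\|$ gives the analogous bound $\int_0^\infty\alpha_\ell(\|u^\star(t)-\bar u\|)\,\mathrm{d}t<\infty$, and under the additional hypothesis that $u^\star(\cdot,x_0)$ is absolutely continuous and bounded the map $t\mapsto\alpha_\ell(\|u^\star(t)-\bar u\|)$ is again uniformly continuous, so the same Barbalat argument yields $u^\star(t)\to\bar u$. I expect the main obstacle to be precisely this upgrade from $L^1$-integrability to a genuine pointwise limit: for a merely $\mathcal{L}^\infty$ control the deviation may exhibit arbitrarily thin excursions and no limit need exist, which is exactly why absolute continuity is imposed in the control statement; for the state the delicate point is justifying boundedness of the trajectory, so that the velocity $f(x^\star,u^\star)$ — hence the modulus of continuity needed for Barbalat — stays under control.
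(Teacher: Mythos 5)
Your proposal is correct and shares the paper's two-step skeleton---first establish $\int_0^\infty \alpha_\ell(\|z^\star(t)-\bar z\|)\,\mathrm{d}t<\infty$, then upgrade integrability to convergence via Barbalat's lemma---but the first step is carried out by a genuinely different argument. The paper proceeds by contradiction using the available-storage characterization of strict dissipativity: if the $\alpha_\ell$-integral were infinite along some optimal pair, the functional in \eqref{eq:Sa_obj} would split as $-V_\infty(x_0)+\int_{\mcl{T}}\alpha_\ell\,\mathrm{d}t=\infty$, with $-V_\infty(x_0)>-\infty$ by Lemma \ref{lem:strgOpt}, contradicting $S^a_{\alpha,\ell}(x_0)<\infty$. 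You instead work directly with the storage function $S$ in \eqref{eq:strDI} and close the estimate with a tail bound obtained from Bellman's principle and the non-strict inequality \eqref{eq:diss} on the shifted problem, namely $\int_{t_1}^\infty \ell(z^\star(t))\,\mathrm{d}t = V_\infty(x^\star(t_1)) \ge -S(x^\star(t_1))$, so that $S(x^\star(t_1))$ cancels and you obtain the explicit uniform bound $\int_0^{t_1}\alpha_\ell\,\mathrm{d}t \le V_\infty(x_0)+S(x_0)$. This is constructive rather than by contradiction, and the quantity you exhibit is essentially the rotated value function $W$ of \eqref{eq:W}; indeed the same inequality powers the paper's later proofs of Theorem \ref{thm:asympStab} and Proposition \ref{prop:expReach}. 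The small price is that your tail step requires the dissipation inequality to be applicable to the shifted problem OCP$_\infty(x^\star(t_1))$, i.e.\ implicitly $x^\star(t_1)\in\mbb{X}_0$ and well-posedness of that problem; this is consistent with the paper's standing use of \eqref{eq:semigroup_prop}, and the existence of the limit defining $\int_{t_1}^\infty\ell$ is covered by strong optimality (Lemma \ref{lem:strgOpt}), but both points deserve a sentence.

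On the Barbalat step your treatment is more explicit than the paper's, which simply cites \cite[Lem.~4]{Michalska94}. You correctly identify uniform continuity (equivalently, boundedness of $f(x^\star,u^\star)$, hence of the trajectory) as the hidden hypothesis. Be aware, however, that your justification of trajectory boundedness---finite measure of the super-level sets ``confines'' the trajectory---is not valid on its own: without a velocity bound a solution can diverge, or even blow up, during excursions of arbitrarily small measure, so boundedness must come from compactness of $\mbb{Z}$ (or a comparable standing assumption), exactly as you concede parenthetically. Since the paper's own invocation of Barbalat's lemma carries the same implicit requirement without comment, this is a shared imprecision rather than a gap of your proof relative to the paper's.
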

\begin{proof}
For the sake of contradiction, assume that---despite  OCP$_T(x_0)$ being strictly dissipative---for some infinite-horizon optimal pair $\hat z(\cdot, x_0)$ generated by $\hat u(\cdot) \in \mathcal{U}^\star_\infty(x_0)$ we have
\[
\lim_{T\to \infty} \int_0^T\alpha_\ell\left(\left\|\hat z(t, x_0)-\bar z\right\|\right) \mathrm{d}t = \infty.
\]
Since $\alpha_\ell \in \mcl{K}_\infty$, there exists a subset $\mathcal{T} \subseteq [0,\infty)$  such that $$\alpha_\ell\left(\left\|\hat z(t, x_0)-\bar z\right\|\right) > 0, \quad \forall t\in\mathcal{T}$$ and $\alpha_\ell\left(\left\|(\hat z(t, x_0)-\bar z\right\|\right) = 0$ for all $t\in [0,\infty)\setminus\mathcal{T}$.

Hence, along $\hat z(\cdot, x_0)$  the functional characterizing $S^a_{\alpha, \ell}$ in \eqref{eq:Sa_obj} can be written as
\begin{multline*}
\int_{[0,\infty)\setminus\mathcal{T}} \alpha_\ell(\cdot) - \ell(\hat z(t, x_0))  \mathrm{d}t + 
\int_{\mathcal{T}} \alpha_\ell(\cdot) - \ell(\hat z(t, x_0))  \mathrm{d}t =\\
\int_0^\infty - \ell(\hat z(t, x_0))  \mathrm{d}t + 
\int_{\mathcal{T}} \alpha_\ell(\cdot)  \mathrm{d}t .
\end{multline*}
Observe that the first term in the last line corresponds to $-V_\infty(x_0)$. Thus we obtain
\[
\underbrace{-V_\infty(x_0)}_{>-\infty \text{ (Lem. \ref{lem:strgOpt})} } + \underbrace{\int_{\mathcal{T}} \alpha_\ell(\|\hat z(t, x_0)-\bar z\|)  \mathrm{d}t}_{=\infty} =\infty.
\]
This, however, means that along $\hat z(\cdot, x_0)$ the functional \eqref{eq:Sa_obj} 
equates to $\infty$, which in turn contradicts
 $S^a_{\alpha, \ell}<\infty$ and thus it also contradicts strict dissipativity. 
Hence, we have for all $u(\cdot) \in \mathcal{U}^\star_T(x_0)$ and all $x_0 \in \mathbb{X}_0$ that
\begin{equation}    \label{eq:alpha_funbnd}
\lim_{T\to \infty} \int_0^T\alpha_\ell\left(\left\|(z(t, x_0, u(\cdot)))-\bar z\right\|\right) \mathrm{d}t < \infty.
\end{equation}
Applying Barbalat's Lemma \cite[Lem. 4]{Michalska94} directly gives 
$\lim_{t\to\infty} x(t, x_0, u^\star(\cdot, x_0)) = \bar x.$
The second assertion follows also via Barbalat's Lemma from  absolute continuity of $u^\star(\cdot, x_0)$. \vspace*{-2mm}
\end{proof}
Note that in the above proof the convergence of the optimal state trajectory requires strictness in \eqref{eq:strDI} only with respect to $x$. Likewise convergence of the control $u$ requires strictness in \eqref{eq:strDI} with respect to $u$.
\begin{remark}[Dissipativity and reachability]~\\
Theorem \ref{thm:attractivity}  highlights the tight interplay between dissipativity and reachability properties. Considering the foundations of dissipativity laid by \cite{Willems72a,Willems71a}---and in particular the definition of available storage and required supply---, Theorem \ref{thm:attractivity} is no surprise. In essence, the crucial strictness of \eqref{eq:strDI} expressed by $\alpha_\ell$ induces an implicit reachability property. 
We will see later in Proposition \ref{prop:expReach} that, under suitable regularity assumptions, dissipativity implies exponential reachability.  
\end{remark}

\begin{lemma}[$V_\infty(\bar x) = 0$]\label{lem:Vinf0}
Let OCP$_T(x_0)$ be strictly dissipative at $\bar z = (\bar x, \bar u)\in\mbb{Z} $, let Assumption \ref{ass:reach} hold, and let $\ell(\bar z) = 0$, then $V_\infty(\bar x) = 0$. Moreover, the infinite-horizon state response satisfies $x^\star(t,\bar x) = \bar x$ on $[0,\infty)$ and $u^\star(t,\bar x) = \bar u$ holds a.e. on $[0,\infty)$.
\end{lemma}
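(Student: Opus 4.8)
The plan is to sandwich $V_\infty(\bar x)$ between $0$ and $0$ and then extract the desired rigidity from the strictness term $\alpha_\ell$. For the upper bound I would use that, by strict dissipativity, $\bar z=(\bar x,\bar u)$ is the (unique) optimal steady state of \eqref{eq:SOP}, so in particular $f(\bar x,\bar u)=0$ and $(\bar x,\bar u)\in\mbb{Z}$. Hence the constant pair $z(t)\equiv\bar z$ is admissible for OCP$_\infty(\bar x)$, its state response is $x(t)\equiv\bar x$, and it incurs cost $\int_0^\infty\ell(\bar z)\,\mathrm{d}t=0$ since $\ell(\bar z)=0$. This gives $V_\infty(\bar x)\le 0$, and by Lemma~\ref{lem:strgOpt} the value is finite, so $V_\infty(\bar x)\in(-\infty,0]$.

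For the matching lower bound I would evaluate \eqref{eq:strDI} along an optimal infinite-horizon pair $z^\star(\cdot,\bar x)$ starting at $\bar x$. Rearranging \eqref{eq:strDI} with $x_0=\bar x$ and $\ell(\bar z)=0$ yields, for every $t_1$,
\[
\int_0^{t_1}\ell(z^\star(t,\bar x))\,\mathrm{d}t \;\ge\; S(x^\star(t_1,\bar x))-S(\bar x)+\int_0^{t_1}\alpha_\ell\!\left(\|z^\star(t,\bar x)-\bar z\|\right)\mathrm{d}t.
\]
By Theorem~\ref{thm:attractivity} the optimal state obeys $x^\star(t_1,\bar x)\to\bar x$, and by \eqref{eq:alpha_funbnd} the $\alpha_\ell$-integral increases to a finite limit $A\ge 0$. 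Letting $t_1\to\infty$, the left-hand side tends to $V_\infty(\bar x)$, so $V_\infty(\bar x)\ge \liminf_{t_1\to\infty}S(x^\star(t_1,\bar x))-S(\bar x)+A$. Invoking nonnegativity and lower semicontinuity of $S$ at $\bar x$ (as enjoyed, for instance, by the available storage, being a pointwise supremum) gives $\liminf_{t_1\to\infty}S(x^\star(t_1,\bar x))\ge S(\bar x)$, whence $V_\infty(\bar x)\ge A\ge 0$.

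Combining the two bounds forces $V_\infty(\bar x)=0$ and, simultaneously, $A=\int_0^\infty\alpha_\ell(\|z^\star(t,\bar x)-\bar z\|)\,\mathrm{d}t=0$. Since $\alpha_\ell\in\mathcal{K}_\infty$ vanishes only at the origin, this last identity yields $z^\star(t,\bar x)=\bar z$ for almost every $t$; by absolute continuity of $x^\star(\cdot,\bar x)$ this upgrades to $x^\star(t,\bar x)=\bar x$ on all of $[0,\infty)$ and $u^\star(t,\bar x)=\bar u$ a.e., as asserted. Note that the same computation shows the constant trajectory is the \emph{unique} optimizer from $\bar x$, so the non-uniqueness selection discussed earlier is immaterial here.

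I expect the delicate point to be the passage to the limit in the dissipation inequality, i.e. justifying $\liminf_{t_1\to\infty}S(x^\star(t_1,\bar x))\ge S(\bar x)$. Mere nonnegativity of $S$ only delivers the too-weak bound $V_\infty(\bar x)\ge -S(\bar x)$; closing the gap needs some regularity of the storage at $\bar x$ (continuity, or the lower semicontinuity automatically available for the available storage $S^a_{\alpha,\ell}$). Everything else—feasibility of the constant control, finiteness via Lemma~\ref{lem:strgOpt}, and the positive definiteness of $\alpha_\ell$—is routine.
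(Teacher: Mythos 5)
Your proof is correct and follows essentially the same route as the paper's: feasibility of the constant pair at $\bar z$ gives $V_\infty(\bar x)\le 0$, the strict dissipation inequality evaluated along the optimal infinite-horizon pair from $\bar x$ together with the attractivity statement of Theorem \ref{thm:attractivity} gives $V_\infty(\bar x)\ge 0$, and positive definiteness of $\alpha_\ell$ then forces $z^\star(\cdot,\bar x)\equiv\bar z$ up to a null set, exactly as in the paper. The delicate point you flag is real but is passed over silently in the paper, whose proof simply writes $S(x_\infty)-S(\bar x)=0$ with $x_\infty=\lim_{t\to\infty}x^\star(t,\bar x)$, thereby implicitly interchanging the limit with the evaluation of $S$---precisely the lower-semicontinuity (or continuity) of the storage function at $\bar x$ that you make explicit (and which, strictly speaking, is not guaranteed by Definition \ref{def:diss} alone, nor fully automatic for the available storage, since its defining supremum involves optimal trajectories whose dependence on $x_0$ need not be regular).
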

\begin{proof}
First note that $(\bar x, \bar u, \bar \lambda, \bar \mu)$ (setting $\lambda_0 =1$) constitutes an equilibrium of \eqref{eq:NCO_x}-\eqref{eq:NCO_u}. Hence at  $x_0 = \bar x$ this equilibrium is an infinite-horizon admissible solution satisfying the necessary conditions \eqref{eq:NCO}. Since $\ell(\bar x, \bar u) =0$, we arrive at $V_\infty(\bar x) \leq  0$.
From Theorem \ref{thm:attractivity} we have that 
\[
x_\infty \doteq \lim_{t\to\infty} x(t, \bar x, u^\star(\cdot)) = \bar x. 
\]
For optimal solutions starting at $x_0 = \bar x$, the strict dissipation inequality \eqref{eq:strDI} can be written as
\[
S(x_\infty) - S(\bar x) \leq 
\int_0^\infty -\alpha_\ell(\left\|z^\star(t, \bar x)-\bar z\right\|)  \mathrm{d}t + V_\infty(\bar x).
\]
Since $x_\infty = \bar x$ we obtain
\begin{equation} \label{eq:VinfLowBnd}
\int_0^\infty \alpha_\ell(\left\|z^\star(t, \bar x)-\bar z\right\|)  \mathrm{d}t \leq V_\infty(\bar x)\leq 0.
\end{equation}
Due to $\alpha_\ell \in \mcl{K}_\infty$, the integral is non-negative and thus  we arrive at
$0\leq V_\infty(\bar x) \leq 0$. 
Finally, for the sake of contradiction, suppose that $z_\infty^\star(t,x_0) \not\equiv \bar z = (\bar x, \bar u)$ on some set $\mcl{T} \subseteq [0,\infty)$ with Lebesgue measure $\nu[\mcl{T}] > 0$.  This would imply
\[
\int_\mcl{T} \alpha_\ell(\left\|z^\star(t, \bar x)-\bar z\right\|)  \mathrm{d}t > 0,
\]
which contradicts \eqref{eq:VinfLowBnd}.
\end{proof}
We consider $W:\Rnx\to\R$  
\begin{equation}\label{eq:W}
W(x) = V_\infty(x) + S(x) - S(\bar x)
\end{equation}
as a candidate Lyapunov function for \eqref{eq:sys}. This choice is motivated by results on stability analysis for economic MPC \citep{Gruene17a,epfl:faulwasser15g} and by fact that the semigroup property \eqref{eq:semigroup_prop} suggests a time-invariant Lyapunov function.
\begin{assumption} \label{ass:contW}
There exists $\alpha_W \in \mcl{K}$ and an open neighborhood of $\bar x$  such that
\[
\underline{\alpha}_W(\|x - \bar x\|)  \leq V_\infty(x) + S(x) -  S(\bar x)
\]
holds locally. 
\end{assumption}
Note the above assumption essentially requires that optimal solutions will not converge arbitrarily fast to $\bar x$, which is reasonable to expect for most physical systems especially.\footnote{Especially, if the Jacobian linearization of \eqref{eq:sys} at $\bar z\in \inte{Z}$ is stabilizable and the stage cost $\ell$ is quadratic in $u$, then such a local bound could be constructed from the solution to the algebraic Riccati equation.}
\begin{assumption}[Controllability or stabilizability] \label{ass:ctrb}
The Jacobian linearization of \eqref{eq:sys} at  $(\bar x, \bar u)$, $(A, B)$ $\doteq(f_x, f_u)$, is a) controllable, or b) stabilizable.
\end{assumption}

\begin{theorem}[\eqref{eq:strDI} $\Rightarrow$ asymptotic primal stability]\label{thm:asympStab}
Let OCP$_T(x_0)$ be strictly dissipative at $\bar z = (\bar x, \bar u)\in \inte{Z}$, and let Assumptions  \ref{ass:contW} and \ref{ass:ctrb}a hold. 
Suppose that $V_\infty$ and some storage function $S$ are $\mathcal{C}^1$ on  an open neighborhood of $\bar x$. Then, for all $ x_0 \in \mathbb{X}_0$, the point $\bar x$ is locally asymptotically stable for the solutions of \eqref{eq:sys}. If, additionally, Assumption \ref{ass:reach}  holds, then $\mbb{X}_0$ is in the region of attraction.
\end{theorem}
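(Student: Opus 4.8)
The plan is to show that the candidate function $W(x) = V_\infty(x) + S(x) - S(\bar x)$ from \eqref{eq:W} is a genuine Lyapunov function for \eqref{eq:sys} on a neighbourhood of $\bar x$, run the textbook sublevel-set stability argument, and then recover attractivity from the integral decrease of $W$ via Barbalat's lemma exactly as in Theorem \ref{thm:attractivity}. First I would record the three defining properties of $W$. By Lemma \ref{lem:Vinf0} we have $V_\infty(\bar x) = 0$, hence $W(\bar x) = 0$; positive definiteness is precisely Assumption \ref{ass:contW}, which supplies $\underline\alpha_W \in \mathcal{K}$ with $\underline\alpha_W(\|x-\bar x\|) \le W(x)$ locally; and continuity of $W$ follows from the standing hypothesis that $V_\infty$ and $S$ are $\mathcal{C}^1$ near $\bar x$, which together with $W(\bar x)=0$ also yields a local class-$\mathcal{K}$ upper bound $W(x) \le \bar\alpha_W(\|x-\bar x\|)$. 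Controllability (Assumption \ref{ass:ctrb}a) at $\bar z \in \inte{Z}$ enters here to guarantee local reachability of $\bar x$, so that a full neighbourhood of $\bar x$ lies in $\mathbb{X}_0$ and the stability notion---which compares the nominal trajectory against trajectories from perturbed admissible initial data---is well posed.

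The decisive step is the monotone decrease of $W$ along \eqref{eq:sys}. I combine Bellman's principle of optimality, valid since Lemma \ref{lem:strgOpt} provides strong optimality and \eqref{eq:semigroup_prop} the semigroup structure, which along the optimal pair reads $V_\infty(x^\star(t_1)) - V_\infty(x_0) = -\int_0^{t_1}\ell(z^\star(t))\,\mathrm{d}t$, with the strict dissipation inequality \eqref{eq:strDI} evaluated using $\ell(\bar z)=0$. The $\ell$-terms cancel, leaving
\[
W(x^\star(t_1)) - W(x_0) \le -\int_0^{t_1}\alpha_\ell\!\left(\|z^\star(t)-\bar z\|\right)\mathrm{d}t \le 0 .
\]
Thus $t \mapsto W(x^\star(t,x_0))$ is non-increasing for every admissible $x_0$, and strictly decreasing off $\bar z$.

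Lyapunov stability then follows in the standard way: sublevel sets of $W$ are forward invariant because $W$ is non-increasing, and the class-$\mathcal{K}$ sandwich traps any trajectory launched in a small sublevel set inside a prescribed $\varepsilon$-ball, shrinking the $\mathcal{C}^1$-neighbourhood if necessary so that the trajectory remains where $W$ is defined. For local attractivity I would exploit the integral form of the displayed inequality: since $W \ge 0$ and $W(x_0) < \infty$ locally, it forces $\int_0^\infty \alpha_\ell(\|z^\star(t)-\bar z\|)\,\mathrm{d}t \le W(x_0) < \infty$, whence Barbalat's lemma gives $x^\star(t,x_0) \to \bar x$ precisely as in Theorem \ref{thm:attractivity}. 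Combined with stability this yields local asymptotic stability of $\bar x$.

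The final claim that all of $\mathbb{X}_0$ lies in the region of attraction is then immediate: under the additional Assumption \ref{ass:reach}, Theorem \ref{thm:attractivity} already provides $x^\star(t,x_0) \to \bar x$ for every $x_0 \in \mathbb{X}_0$, and once such a trajectory enters the neighbourhood on which local stability was established it can no longer leave. The main obstacle I anticipate is not any single estimate but the careful bookkeeping around locality---verifying that the neighbourhood on which $W$ is $\mathcal{C}^1$ and positive definite is forward invariant for the relevant trajectories and is itself reachable via controllability---so that the global-looking conclusion ``$\mathbb{X}_0$ is in the region of attraction'' is legitimately glued together from the local Lyapunov argument and the global attractivity furnished by Theorem \ref{thm:attractivity}.
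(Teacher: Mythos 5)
Your proof is correct and arrives at exactly the paper's Lyapunov certificate---the same candidate $W$ from \eqref{eq:W}, the same class-$\mathcal{K}$ sandwich, and the same decrease rate $-\alpha_\ell$---but the decisive decrease step is obtained by a genuinely different mechanism. The paper argues \emph{differentially}: it computes $\dot W = (\nabla V_\infty + \nabla S)^\top f^\star$, bounds $\nabla S^\top f^\star$ by the differential form of \eqref{eq:strDI}, and eliminates the leftover term $\ell + \nabla V_\infty^\top f^\star$ through the PMP identities $\nabla V_\infty = \lambda^\star$ from \eqref{eq:AdjointNablaV} and the infinite-horizon condition $H \equiv 0$ from \eqref{eq:HamiltonZero}. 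You argue \emph{integrally}: adding the Bellman identity $V_\infty(x^\star(t_1)) - V_\infty(x_0) = -\int_0^{t_1}\ell(z^\star(t))\,\mathrm{d}t$ to \eqref{eq:strDI} cancels the $\ell$-terms and yields $W(x^\star(t_1)) - W(x_0) \le -\int_0^{t_1}\alpha_\ell(\|z^\star(t)-\bar z\|)\,\mathrm{d}t$ directly. Your route is more elementary: it needs neither the identification of the adjoint with $\nabla V_\infty$ (which in the paper presupposes no active state constraints) nor the Hamiltonian-vanishing condition, and it uses the $\mathcal{C}^1$ hypothesis on $V_\infty$ and $S$ only through continuity; the paper's differential route, in exchange, exposes the adjoint structure that is reused later (Theorem \ref{thm:adjointConv}). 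Two further divergences: the paper derives the upper bound $\overline{\alpha}_W$ from controllability via \cite[Prop.~1]{Polushin05a}, whereas you get it from continuity of $W$ together with $W(\bar x)=0$, using Assumption \ref{ass:ctrb}a only to ensure a full neighborhood of $\bar x$ is admissible---which incidentally shows that assumption does less work in your argument than in the paper's; and you make explicit the gluing of Theorem \ref{thm:attractivity} (attractivity on $\mathbb{X}_0$ under Assumption \ref{ass:reach}) with local stability to justify the region-of-attraction claim, which the paper leaves implicit. One caveat, inherited from the paper rather than introduced by you: your appeals to Lemma \ref{lem:Vinf0} and Lemma \ref{lem:strgOpt} formally require Assumption \ref{ass:reach}, which is not among the hypotheses of the first claim of the theorem; in your argument this is harmless because finiteness of $V_\infty$ near $\bar x$ already follows from the assumed $\mathcal{C}^1$ regularity, but it is worth flagging.
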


\begin{proof}
We consider the Lyapunov function candidate $W$ from \eqref{eq:W}. 
From Lemma \ref{lem:Vinf0} it follows that $W(\bar x) = 0$. 
Moreover, the strict dissipation inequality \eqref{eq:strDI} can be written as
\[
\int_0^\infty \alpha_\ell(\left\|z^\star(t, x)-\bar z\right\|)  \mathrm{d}t  \leq V_\infty( x) 
+ S(x) -  S(\bar x).
\]
As shown in \cite[Prop. 1]{Polushin05a}, controllability of the  Jacobian linearization of \eqref{eq:sys} at  $(\bar x, \bar u)$ implies on a neighborhood of $\bar x$
that
\begin{align*}
S(x) -  S(\bar x) & \leq \overline{\alpha}_S(\|x-\bar x\|),\quad\, \alpha_S \in \mcl{K}\\
 V_\infty(x)&\leq \overline{\alpha}_V(\|x-\bar x\|),\quad  \alpha_V \in \mcl{K}.
\end{align*}
 The usual derivative along the trajectories of \eqref{eq:sys} gives
\[
\dot W = \nabla W^\top f^\star(t, x) = (\nabla V_\infty  + \nabla S)^\top f^\star(t, x).
\]
Recall that the differential counterpart of \eqref{eq:strDI} reads
\[
 \nabla S^\top f^\star(t, x) \leq -\alpha_\ell(\|(x,u^\star(t, x_0)) -\bar z\|) + \ell(x,u^\star(t, x_0)).
\]
Hence, we have 
\[
\dot W \leq  -\alpha_\ell(\|(x,u^\star(\cdot)) -\bar z\|) + \ell(x,u^\star(\cdot)) + \nabla V_\infty^\top f^\star(t, x).
\]
Now, consider a neighborhood $\tilde{\mbb{X}}$ of $\bar x$ where  $\nabla V_\infty$ is $\mcl{C}^1$.
Recall that the adjoint variable $\lambda^\star$ corresponds to the gradient of the optimal value function, i.e. $\nabla V_\infty = \lambda^\star$ \eqref{eq:AdjointNablaV}.
Then on $\tilde{\mbb{X}}$ we have
\[
 \ell(x^\star,u^\star(\cdot)) + \nabla V_\infty^\top f^\star(t, x) = H(\lambda_0^\star, \lambda^\star, \mu^\star, x^\star, u^\star) \equiv 0,
\]
where the second equality follows from \eqref{eq:HamiltonZero}. Via $-\alpha_\ell(\|(x,u) -\bar z\|) \leq -\alpha_\ell(\|x -\bar x\|)$, we arrive at
\begin{align*}
\underline{\alpha}_W(\|x - \bar x\|) \leq W(x) \leq& \phantom{-}\overline{\alpha}_W(\|x-\bar x\|)\\
\dot W(x) \leq&  
-\alpha_\ell(\|x -\bar x\|)
\end{align*}
where $\overline{\alpha}_W(\|x-\bar x\|) \doteq \overline{\alpha}_S(\|x-\bar x\|) + \overline{\alpha}_V(\|x-\bar x\|)$.
\end{proof}
One may wonder whether the  normalization of $W$ with $-S(\bar x)$ in \eqref{eq:W} can be avoided. The next lemma gives conditions which imply $S^a_{\alpha, \ell}(\bar x) = 0$.
\begin{lemma}[$S^a_{\ell}(\bar x) = 0$] \label{lem:availStor0}
Let OCP$_T(x_0)$ be strictly dissipative at $\bar z = (\bar x, \bar u)\in\mbb{Z} $  and let $\ell(\bar z) = 0$ and $\ell(z^\star(t)) \leq 0$ along optimal solutions. Then the available storage for non-strict dissipativity satisfies $S^a_{\ell}(\bar x)=0$.
Moreover, if \eqref{eq:sys} has an equilibrium at $\bar z= (\bar x, \bar z)$, then also the available storage for strict dissipativity satisfies $S^a_{\alpha, \ell}(\bar x) = 0$. 
\end{lemma}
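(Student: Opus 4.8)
The plan is to reduce both identities to the behaviour of the finite-horizon value $V_T(\bar x)$ along optimal trajectories starting at $\bar x$. Since every $u(\cdot)\in\mathcal{U}^\star_T(\bar x)$ realises the same cost $V_T(\bar x)$, the definitions \eqref{eq:availStorage} and \eqref{eq:strAvailStorage} collapse to
\[
S^a_{\ell}(\bar x)=\sup_{T\geq 0}\bigl(-V_T(\bar x)\bigr),\qquad S^a_{\alpha,\ell}(\bar x)=\sup_{T\geq 0}\Bigl(\int_0^T \alpha_\ell\bigl(\|z^\star_T(t,\bar x)-\bar z\|\bigr)\,\mathrm{d}t-V_T(\bar x)\Bigr).
\]
Two bounds are immediate. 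As $\bar z$ solves \eqref{eq:SOP} we have $f(\bar x,\bar u)=0$ and $\bar z\in\mbb{Z}$, so the constant pair $z(\cdot)\equiv\bar z$ is admissible for OCP$_T(\bar x)$ with cost $0$; hence $V_T(\bar x)\leq 0$ for all $T$. On the other hand $\ell(z^\star(t))\leq 0$ along optimal solutions gives $-V_T(\bar x)\geq 0$, so $S^a_{\ell}(\bar x)\geq 0$ (with equality already at $T=0$). The entire content of the first claim is therefore the reverse inequality $V_T(\bar x)\geq 0$, i.e.\ $V_T(\bar x)=0$.

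To obtain $V_T(\bar x)\geq 0$ I would pass through the infinite horizon. Assumption~\ref{ass:reach} holds at $x_0=\bar x$ for free, since $\tilde u\equiv\bar u$ keeps the state at $\bar x$ and yields $\ell(\tilde z(t))=\ell(\bar z)=0$; hence Lemma~\ref{lem:Vinf0} applies at $\bar x$ and gives $V_\infty(\bar x)=0$. Now fix $T$, let $z^\star_T(\cdot,\bar x)$ be optimal for OCP$_T(\bar x)$ with terminal state $x_1\doteq x^\star_T(T,\bar x)\in\mbb{X}_0$, and concatenate it on $[0,T]$ with an optimal infinite-horizon continuation from $x_1$. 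The resulting infinite-horizon trajectory from $\bar x$ is admissible, so
\[
0=V_\infty(\bar x)\leq V_T(\bar x)+V_\infty(x_1).
\]
Because $\ell\leq 0$ along the optimal continuation we have $V_\infty(x_1)\leq 0$ (and the displayed inequality rules out $V_\infty(x_1)=-\infty$), whence $V_T(\bar x)\geq -V_\infty(x_1)\geq 0$. Combined with $V_T(\bar x)\leq 0$ this yields $V_T(\bar x)=0$ and thus $S^a_{\ell}(\bar x)=0$.

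For the second claim I would first note that, by the first part, $V_T(\bar x)=0$, so the display above forces $V_\infty(x_1)=0$ and
\[
S^a_{\alpha,\ell}(\bar x)=\sup_{T\geq 0}\int_0^T \alpha_\ell\bigl(\|z^\star_T(t,\bar x)-\bar z\|\bigr)\,\mathrm{d}t.
\]
It remains to show that every finite-horizon optimal pair from $\bar x$ is the constant pair $\bar z$. Here the hypothesis that \eqref{eq:sys} has an equilibrium at $\bar z$ enters: staying at $\bar z$ is an infinite-horizon optimal response from $\bar x$. The concatenation of $z^\star_T(\cdot,\bar x)$ with the optimal continuation from $x_1$ has cost $V_T(\bar x)+V_\infty(x_1)=0=V_\infty(\bar x)$, so it is itself an infinite-horizon optimal trajectory from $\bar x$; by the uniqueness delivered by strict dissipativity (cf.\ the proof of Lemma~\ref{lem:Vinf0}, where $\int_0^\infty\alpha_\ell=0$ forces constancy) it equals $\bar z$ a.e., and in particular its restriction $z^\star_T(\cdot,\bar x)$ coincides with $\bar z$ on $[0,T]$. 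Since $\alpha_\ell\in\mcl{K}_\infty$ is positive definite, the integral vanishes for every $T$ and $S^a_{\alpha,\ell}(\bar x)=0$.

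The crux is the reverse inequality $V_T(\bar x)\geq 0$ (and, for the second part, the stronger constancy of the optimizer). It cannot be obtained by the standard rotated-cost argument: in this paper the dissipation inequality \eqref{eq:strDI} is imposed only along optimal pairs, so it may not be applied to the competing constant trajectory, and using it along $z^\star_T(\cdot,\bar x)$ only yields the floor $V_T(\bar x)\geq -S(\bar x)$. This is exactly why the argument must route through the infinite-horizon value $V_\infty(\bar x)=0$, the cost-free validity of Assumption~\ref{ass:reach} at $\bar x$, and a concatenation that exploits $\ell\leq 0$ along optimal solutions.
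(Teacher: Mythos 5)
Your proof is correct and rests on the same cornerstone as the paper's own argument, namely Lemma \ref{lem:Vinf0} ($V_\infty(\bar x)=0$ together with stationarity of the infinite-horizon optimizer at $\bar x$), but the execution is genuinely different. The paper's proof consists of two short observations: since $-\ell\geq 0$ along optimal solutions, the supremum in \eqref{eq:availStorage} is ``attained for $T\to\infty$'', so $S^a_\ell(\bar x)=-V_\infty(\bar x)=0$; and the equilibrium hypothesis makes the infinite-horizon optimizer from $\bar x$ stationary, so the $\alpha_\ell$-term in \eqref{eq:strAvailStorage} vanishes. This implicitly identifies the \emph{finite-horizon} optimal trajectories over which \eqref{eq:availStorage}--\eqref{eq:strAvailStorage} are defined with the \emph{infinite-horizon} one. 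You instead prove the stronger pointwise facts that $V_T(\bar x)=0$ for every finite $T$, and that every finite-horizon optimal pair from $\bar x$ equals $\bar z$ a.e., via the concatenation inequality $V_\infty(\bar x)\leq V_T(\bar x)+V_\infty(x_1)$ combined with $V_\infty(x_1)\leq 0$; this dynamic-programming step is exactly what the paper's ``supremum attained at $T\to\infty$'' claim sweeps under the rug, and it is what legitimately connects the finite-horizon trajectories in the available-storage definitions to the infinite-horizon statement of Lemma \ref{lem:Vinf0}. Your remark that Assumption \ref{ass:reach} holds at $\bar x$ for free (constant control $\tilde u\equiv\bar u$) also repairs a hypothesis mismatch: Lemma \ref{lem:availStor0} does not assume Assumption \ref{ass:reach}, yet the paper's proof invokes Lemma \ref{lem:Vinf0}, which does. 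Two caveats you should state explicitly rather than leave implicit: your argument assumes $x_1=x^\star_T(T,\bar x)\in\mbb{X}_0$, so that an optimal infinite-horizon continuation exists and the hypothesis $\ell\leq 0$ applies along it (an invariance-type condition the paper also needs but never states); and your localized application of Lemma \ref{lem:Vinf0} and Theorem \ref{thm:attractivity} at the single point $\bar x$ is valid only because their proofs use Assumption \ref{ass:reach} solely at the initial condition under consideration.
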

\begin{proof}
For the case of non-strict dissipativity, it follows from \eqref{eq:availStorage} that for all $x \in \mbb{X}_0$ the equality $S^a_{\ell}(x) = -V_\infty(x)$ since $-\ell(x,u) \geq 0$, and the supremum in \eqref{eq:availStorage} is attained for $T \to \infty$. Hence from Lemma \ref{lem:Vinf0} we have $S^a_{\ell}(\bar x) = -V_\infty(\bar x) = 0$.
In case of strict dissipativity, observe that an equilibrium of \eqref{eq:sys} at $\bar z= (\bar x, \bar z)$ implies that the optimal solution of OCP$_\infty(\bar x)$ is unique and stationary. Hence the class $\mcl{K}_\infty$ function $\alpha_\ell$ in \eqref{eq:strAvailStorage} equates to $0$ almost everywhere. Thus   we have $S^a_{\alpha, \ell}(\bar x) = 0$. 
\end{proof}
Next, we combine stability of the optimally controlled  system with polynomial bounds on the  Lyapunov function to obtain exponential stability/reachability.

\begin{assumption}[Polynomial bounds on $W$]\label{ass:polynomBndW}~\\
The function $W:\Rnx\to\mbb{R}^+_0$ from \eqref{eq:W} is bounded from above by $\overline{\alpha}_W$ and from below by $\underline{\alpha}_W$ such that 
\begin{align*}
\overline{\alpha}_W(\|x -\bar x\|) &= w_1\|x -\bar x\|^{w_2}, \quad \quad w_1> 0, ~ w_2\geq 1,  \\
\underline{\alpha}_W(\|x -\bar x\|) &= w_3\|x -\bar x\|^{w_4}, \quad \quad w_3> 0, ~ w_4\geq 1.
\end{align*}

\end{assumption}
\begin{proposition}[\eqref{eq:strDI} $\Rightarrow$ exp. reachability]  \label{prop:expReach}
For all $x_0\in \mbb{X}_0$,  let OCP$_T(x_0)$ be strictly dissipative at $\bar z = (\bar x, \bar u)\in\inte{Z} $ with polynomial strictness, i.e. \eqref{eq:strDI} holds with 
\begin{equation}
\alpha_\ell(\|x -\bar x\|) = c_\ell\|x -\bar x\|^{p_\ell}, \qquad~ p_\ell \geq 1, \,~c_\ell > 0.
\end{equation}
Suppose that the conditions of Theorem \ref{thm:asympStab} hold, and let Assumption \ref{ass:polynomBndW} hold. 

Then, for all $x_0\in \tilde{\mbb{X}}_0$, there exists constants $C >0, \rho >0$ such that the optimal infinite-horizon state responses satisfy 
\[
\|x^\star(t,x_0, u^\star(\cdot, x_0)) - \bar x\| \leq Ce^{-\rho t},
\]
i.e. $\bar x$ is exponentially reachable, where $ \tilde{\mbb{X}}_0$ is any compact subset of $\mbb{X}_0$.
\end{proposition}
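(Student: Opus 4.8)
The plan is to recycle the Lyapunov apparatus built in the proof of Theorem~\ref{thm:asympStab} and to upgrade its conclusion from asymptotic to exponential convergence by exploiting the polynomial shape of the bounds in Assumption~\ref{ass:polynomBndW}. Two ingredients are already in hand: along the solutions of \eqref{eq:sys} one has the Lyapunov decrease $\dot W(x)\le -\alpha_\ell(\|x-\bar x\|)=-c_\ell\|x-\bar x\|^{p_\ell}$, and the sandwich $w_3\|x-\bar x\|^{w_4}\le W(x)\le w_1\|x-\bar x\|^{w_2}$. The whole argument is then a scalar comparison: first turn the decrease into a differential inequality for the single quantity $W$, solve that inequality to extract exponential decay of $W$, and finally transfer this decay back to $\|x-\bar x\|$ via the lower bound.

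For the reduction I would use the upper bound to eliminate the state: since $W(x)\le w_1\|x-\bar x\|^{w_2}$ we have $\|x-\bar x\|\ge (W(x)/w_1)^{1/w_2}$, whence
\[
\dot W \le -c_\ell\,\|x-\bar x\|^{p_\ell}\le -c_\ell\,w_1^{-p_\ell/w_2}\,W^{p_\ell/w_2}.
\]
In the matched regime $p_\ell=w_2$ this is the linear inequality $\dot W\le -\kappa W$ with $\kappa=c_\ell/w_1$, and a standard comparison argument gives $W(x(t))\le W(x_0)\,e^{-\kappa t}$. Transferring back through $w_3\|x-\bar x\|^{w_4}\le W$ yields $\|x(t)-\bar x\|\le (W(x_0)/w_3)^{1/w_4}e^{-\kappa t/w_4}$, i.e. the claimed bound with $\rho=\kappa/w_4$. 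Uniformity over a compact $\tilde{\mbb{X}}_0$ is then immediate: $W(x_0)\le w_1\sup_{x_0\in\tilde{\mbb{X}}_0}\|x_0-\bar x\|^{w_2}=:M<\infty$, so $C=(M/w_3)^{1/w_4}$ can be chosen independently of $x_0$.

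The step I expect to be the main obstacle is the exponent bookkeeping when $p_\ell\neq w_2$, since only then is the scalar inequality genuinely nonlinear. If $p_\ell>w_2$, the comparison equation $\dot W=-\kappa W^{p_\ell/w_2}$ decays merely polynomially, so exponential reachability cannot hold; one therefore needs $p_\ell\le w_2$, which is in any case consistent with the relation $w_4\ge w_2$ forced by the sandwich $w_3\|x-\bar x\|^{w_4}\le w_1\|x-\bar x\|^{w_2}$ near $\bar x$. For the remaining case $p_\ell<w_2$ I would localise: by Theorem~\ref{thm:attractivity} every optimal trajectory from $\mbb{X}_0$ converges to $\bar x$ and hence enters a fixed ball $\{\|x-\bar x\|\le r\}$ after a finite time $t_0$; on that ball $\|x-\bar x\|^{p_\ell}=\|x-\bar x\|^{w_2}\|x-\bar x\|^{\,p_\ell-w_2}\ge r^{\,p_\ell-w_2}\|x-\bar x\|^{w_2}\ge (r^{\,p_\ell-w_2}/w_1)\,W$, which restores $\dot W\le -\kappa W$ with $\kappa=c_\ell\,r^{\,p_\ell-w_2}/w_1>0$. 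The exponential tail then follows as above, and the bounded transient on $[0,t_0]$ is absorbed into the constant $C$. The final technical point is to show that the entry time $t_0$ and the transient excursion are \emph{uniform} over the compact set $\tilde{\mbb{X}}_0$, so that a single pair $(C,\rho)$ serves all $x_0$; this follows from Theorem~\ref{thm:attractivity} together with compactness.
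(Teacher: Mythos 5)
Your first half---deriving local exponential stability of \eqref{eq:sys} near $\bar x$ from the polynomial sandwich on $W$ and the decrease $\dot W\le -c_\ell\|x-\bar x\|^{p_\ell}$---coincides with the paper's starting point, and your exponent bookkeeping ($p_\ell\le w_2$ being needed) is in fact more careful than the paper's. But the rest has a genuine gap, in two layers. First, the decrease inequality is not available globally: the hypotheses of Theorem \ref{thm:asympStab} only give $V_\infty$ and $S$ of class $\mathcal{C}^1$ on an open neighborhood of $\bar x$, and the identity $\ell(x,u^\star)+\nabla V_\infty^\top f^\star=H\equiv 0$ used to obtain $\dot W\le-\alpha_\ell$ is established only on that neighborhood. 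So your ``matched regime'' argument, which integrates the scalar inequality $\dot W\le-\kappa W$ from an arbitrary $x_0\in\tilde{\mbb{X}}_0$, is unjustified; every case of your proof in fact reduces to the localized one, and the entire burden falls on the entry time $t_0$ into the region where the Lyapunov machinery is valid.

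Second, and decisively: you assert that $t_0$ and the transient are uniform over $\tilde{\mbb{X}}_0$ ``from Theorem \ref{thm:attractivity} together with compactness.'' Pointwise attractivity on a compact set does not give uniform entry times, because the map $x_0\mapsto u^\star(\cdot,x_0)$ (hence $x_0\mapsto x^\star(\cdot,x_0)$) need not be continuous, so no continuity/covering argument is available. The paper closes exactly this hole with a quantitative, turnpike-type estimate that your proposal never uses: integrating \eqref{eq:strDI} gives
\[
\int_0^\infty \alpha_\ell\bigl(\|x^\star(t,x_0)-\bar x\|\bigr)\,\mathrm{d}t \;\le\; W(x_0)\;\le\; C_W \qquad \forall\, x_0\in\tilde{\mbb{X}}_0,
\]
with $C_W$ finite by Assumption \ref{ass:polynomBndW} and compactness, whence the Lebesgue measure of $\{t\,|\,\|x^\star(t,x_0)-\bar x\|>\delta\}$ is at most $C_W/\alpha_\ell(\delta)$, uniformly in $x_0$. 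Thus by any time $t(\varepsilon)>C_W/\alpha_\ell(\delta)$ every optimal trajectory has visited the $\delta$-ball, and forward invariance of a level set $\Omega(\varepsilon)$ under local exponential stability keeps it there, yielding a single pair $(C,\rho)$ for all of $\tilde{\mbb{X}}_0$. Without this dissipation-integral step your constant $C$ may depend on $x_0$, i.e., you obtain only pointwise exponential convergence rather than the claimed uniform bound; to repair the proof, replace the appeal to ``attractivity plus compactness'' by the measure bound above.
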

\begin{proof}
If the  functions $\alpha_\ell \in \mcl{K}_\infty$, $\overline{\alpha}_W \in \mcl{K}$, and $\underline{\alpha}_W \in \mcl{K}$ are polynomial, 
then Theorem \ref{thm:asympStab} implies local exponential stability of \eqref{eq:sys}.
It remains to show that, for all $x_0\in \tilde{\mbb{X}}_0$, a sufficiently small $\varepsilon$-neighborhood of $\bar x$ is reached in finite time $t(\varepsilon)$ and independently of $x_0$.

To this end, observe that, on any compact set $ \tilde{\mbb{X}}_0$, Assumption \ref{ass:polynomBndW} implies $W(x) \leq C_W$. Moreover, consider the set
$
\Theta_{\varepsilon, \infty}(\bar x) \doteq  \{t\in [0,\infty)\,|\, \|x^\star(t,x_0) - \bar x \| > \varepsilon\}$. The inequality \eqref{eq:strDI} yields
\[
\int_0^\infty \alpha_\ell(\|x^\star(t,x_0) - \bar x \|)\mathrm{d}t \leq W(x_0) \leq C_W.
\]
Similar to turnpike results in \citep{epfl:faulwasser15h}, splitting the integration domain $[0,\infty)$ into $\Theta_{\varepsilon, \infty}(\bar x)$ and $[0,\infty)\setminus \Theta_{\varepsilon, \infty}(\bar x)$ we obtain
$\textstyle
\nu[\Theta_{\varepsilon, \infty}(\bar x)] \leq \frac{C_W}{\alpha_\ell(\varepsilon)}$, where
$\nu[\cdot]$ is the Lebesgue measure on the real line. 
Note that this inequality implies that the time the optimal state response spends outside of $\mcl{B}_{\varepsilon}(\bar x)$ is bounded, independent of $T$ and $x_0\in \tilde{\mbb{X}}_0$, from above by $\frac{C_W}{\alpha_\ell(\varepsilon)}$.

There exists $\bar\varepsilon>0$ and $\varrho >0$ such that $\mcl{B}_{\bar \varepsilon}(\bar x) \subseteq \Omega(\varrho) \subset \Rnx$ with 
\begin{align*}
\Omega(\varrho)&\doteq \left\{x\,|\, W(x) \leq w_1 \varrho^{w_2}, \dot W(x) \leq -c_\ell\|x-\bar x\|^{p_\ell}\right\}
\end{align*}
i.e. some $\bar \varepsilon$ neighborhood of $\bar x$ will be contained in a  level set of $W(x)$ on which local exponential stability of \eqref{eq:sys} holds. Now pick any $\delta, \varepsilon \in (0,\bar\varepsilon)$ and consider $t(\varepsilon) > \frac{C_W}{\alpha_\ell(\delta)}$ with $\delta < \varepsilon$ such that $\mcl{B}_\delta(\bar x) \subseteq \Omega(\varepsilon)\subseteq \mcl{B}_\varepsilon(\bar x)$. 
If $\|x^\star(t(\varepsilon),x_0) - \bar x \| \leq \delta < \varepsilon < \bar\varepsilon$, then for all $t \geq t(\varepsilon)$ the exponential convergence bound holds. Suppose that  $\|x^\star(t(\varepsilon),x_0) - \bar x \| > \delta$, then due to $\textstyle t(\varepsilon) > \frac{C_W}{\alpha_\ell(\delta)}$, there exist $\tilde t \in 
[0,t(\varepsilon))$ with $\|x^\star(\tilde t,x_0) - \bar x \| \leq \delta < \varepsilon < \bar\varepsilon$. Due to
 local exponential stability of \eqref{eq:sys}, once $\|x^\star(\tilde t,x_0) - \bar x \| < \delta$ the solution of \eqref{eq:sys} will not leave $\Omega(\varepsilon) \subseteq \mcl{B}_\varepsilon(\bar x)$ for $t > \tilde t$. Hence $\|x^\star(t(\varepsilon),x_0) - \bar x \| < \varepsilon$ and thus the exponential convergence bound holds. 
 \end{proof}

\subsection{Converse dissipativity results}
\begin{proposition}[Exp. primal stability $\Rightarrow$ \eqref{eq:strDI}] \label{prop:pstabDI}
Let the solutions of \eqref{eq:sys} be locally exponentially stable on some neighborhood $\mcl{B}(\bar x)$, let $\ell$ be Lipschitz continuous on $\mbb{Z}$ with constant $L_\ell$, and let  the optimal infinite-horizon inputs satisfy 
\begin{equation} \label{eq:BndUopt}
\|u^\star(t, x_0) - \bar u\| \leq C_u e^{-\rho_u t} \quad \text{with } C_u >0, \rho_u>0.
\end{equation}
Then, for all $x_0 \in \mcl{B}(\bar x)$, 
OCP$_T(x_0)$ is strictly dissipative at $\bar z = (\bar x, \bar u)$ with polynomial $\alpha_\ell(\|z -\bar z\|)$.
\end{proposition}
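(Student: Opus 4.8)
The plan is to exhibit an explicit non-negative storage function for which the strict dissipation inequality \eqref{eq:strDI} holds along the optimal solutions, and to use the two exponential hypotheses to guarantee that this function is finite. Since strict dissipativity is equivalent to finiteness of the available storage \eqref{eq:strAvailStorage}, the entire task reduces to showing that the relevant integral does not diverge.

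First I would merge the two decay estimates. Local exponential stability of \eqref{eq:sys} on $\mcl{B}(\bar x)$ provides $\|x^\star(t,x_0)-\bar x\|\le M\|x_0-\bar x\|e^{-\rho_x t}$ for some $M,\rho_x>0$, and combining this with \eqref{eq:BndUopt} yields $\|z^\star(t,x_0)-\bar z\|\le \kappa(x_0)e^{-\rho t}$, where $\rho\doteq\min\{\rho_x,\rho_u\}$ and $\kappa(x_0)\doteq M\|x_0-\bar x\|+C_u$ is finite on $\mcl{B}(\bar x)$. I would then fix the polynomial strictness $\alpha_\ell(\|z-\bar z\|)=c_\ell\|z-\bar z\|^{p_\ell}$ for arbitrary $p_\ell\ge 1$, $c_\ell>0$; the subsequent estimates converge for any such choice.

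Next I would take the candidate storage $S(x)\doteq\sup_{\tau\ge 0}\int_0^{\tau}[\alpha_\ell(\|z^\star(t,x)-\bar z\|)-\ell(z^\star(t,x))]\,\mathrm{d}t$, i.e.\ the available storage along the infinite-horizon optimal path. It is non-negative by taking $\tau=0$, and it is finite: the first term is dominated by $c_\ell\kappa(x_0)^{p_\ell}e^{-p_\ell\rho t}$, while Lipschitz continuity of $\ell$ on $\mbb{Z}$ together with $\ell(\bar z)=0$ gives $|\ell(z^\star(t,x))|\le L_\ell\kappa(x_0)e^{-\rho t}$, so both integrands are dominated by integrable exponentials and $S(x)<\infty$ on $\mcl{B}(\bar x)$. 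To verify \eqref{eq:strDI} I would invoke the semigroup property \eqref{eq:semigroup_prop}: writing $\Phi_x(\tau)\doteq\int_0^{\tau}[\alpha_\ell(\|z^\star(t,x)-\bar z\|)-\ell(z^\star(t,x))]\,\mathrm{d}t$, optimality of the truncation yields $\Phi_{x^\star(t_1,x)}(\tau)=\Phi_x(t_1+\tau)-\Phi_x(t_1)$, hence $S(x^\star(t_1,x))-S(x)=\sup_{\sigma\ge t_1}\Phi_x(\sigma)-\Phi_x(t_1)-\sup_{\sigma\ge0}\Phi_x(\sigma)\le-\Phi_x(t_1)$, which is precisely \eqref{eq:strDI}. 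Here $S(\bar x)=0$ since $z^\star(t,\bar x)\equiv\bar z$, consistent with Lemma \ref{lem:Vinf0}.

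The main obstacle is the horizon quantifier concealed in the available storage \eqref{eq:strAvailStorage}: its supremum ranges over the finite-horizon optimal inputs $\mathcal{U}^\star_T(x_0)$ for every $T$, whereas the stability hypothesis and \eqref{eq:BndUopt} only control the infinite-horizon optimal pair. The delicate point is thus to ensure the exponential decay transfers to all trajectories entering \eqref{eq:strAvailStorage}; I would either restrict attention to the infinite-horizon optimal solution and work with the ``along optimal solutions'' reading of Definition \ref{def:diss}, or first derive a uniform exponential estimate for the finite-horizon optimal pairs before repeating the integrability bound. Everything else---non-negativity, the value at $\bar x$, and the dissipation inequality itself---is routine once the combined decay bound is in hand.
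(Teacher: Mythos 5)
Your proposal is correct and follows essentially the same route as the paper: merge local exponential stability of \eqref{eq:sys} with \eqref{eq:BndUopt} into an exponential decay bound on $\|z^\star(t,x_0)-\bar z\|$, use Lipschitz continuity of $\ell$ together with $\ell(\bar z)=0$ to dominate the integrand, and certify strict dissipativity with a polynomial $\alpha_\ell$ (the paper simply fixes the linear choice $\alpha_\ell(s)=c\cdot s$) via finiteness of the available storage \eqref{eq:strAvailStorage}. The quantifier issue you flag---that \eqref{eq:strAvailStorage} ranges over finite-horizon optimal inputs while the hypotheses only control the infinite-horizon optimal pair---is resolved in the paper exactly as you propose, namely by restricting the available storage to solutions of \eqref{eq:sys} (the ``along optimal solutions'' reading); your explicit verification of \eqref{eq:strDI} through the semigroup property \eqref{eq:semigroup_prop} is additional detail the paper leaves implicit in the classical statement that finiteness of the available storage certifies dissipativity.
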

\begin{proof}
Lipschitz continuity of $\ell$ combined with $\ell(\bar z)=0$ gives
\begin{align*}
\int_0^{\infty} \ell(z^\star(t, x_0)) -\ell(\bar z)\mathrm{d}t &\leq 
\int_0^{\infty}\left| \ell(z^\star(t, x_0)) -\ell(\bar z) \right|\mathrm{d}t \\ &\leq 
L_\ell\int_0^\infty \|z^\star(t, x_0) - \bar z\| \mathrm{d}t . 
\end{align*}
Now exponential stability of \eqref{eq:sys} and the exponential bound on the optimal infinite-horizon controls imply
\[
\int_0^{\infty} \ell(z^\star(t, x_0)) -\ell(\bar z)\mathrm{d}t \\\leq  L_\ell\cdot C <\infty.
\]
Recall that boundedness of the available storage certifies dissipativity. Hence we use \eqref{eq:strAvailStorage} and
consider
$
\alpha_\ell(s)\doteq c\cdot s.$ 
As $\ell(\bar z)=0$ it follows immediately that, for any finite $c>0$ and all $T\geq 0$, the solutions of \eqref{eq:sys} satisfy
$
\bar S^a_{\alpha,\ell}(x_0) < \infty
$
for all $x_0 \in \mcl{B}(\bar x)$.
\end{proof}
Theorem 1 of \cite{Willems71a} establishes the equiavalence between dissipativity and boundedness of $V_\infty$ for linear-quadratic problems. Next we show equivalence for the non-linear setting.  
\begin{theorem}[$V_\infty(x) >- \infty$ $\Leftrightarrow$ \eqref{eq:diss}]\label{thm:VinfBnd}~\\
The following two statements are equivalent:
\begin{enumerate}
\item[(i)] There exists a non-negative function $S:\mbb{X} \to \R_0^+$, satisfying \eqref{eq:diss}  on $\mbb{X}_0\subseteq \mbb{X}\subset \Rnx$.
\item[(ii)] For all $x \in \mbb{X}_0$ and all $T\in \mbb{R}^+\cup \infty$, it holds that $V_T(x)>-\infty$.
\end{enumerate}
\end{theorem}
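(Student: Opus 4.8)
The plan is to establish the two implications separately, using the available-storage characterization of dissipativity as the bridge. For (i) $\Rightarrow$ (ii) I would exploit nothing beyond non-negativity of $S$: fixing $x_0\in\mbb{X}_0$ and a finite horizon $T$, evaluating \eqref{eq:diss} along the associated optimal pair at $t_1=T$ gives
\[
V_T(x_0)=\int_0^T \ell(z^\star_T(t,x_0))\,\mathrm{d}t \;\geq\; S(x^\star_T(T))-S(x_0)\;\geq\;-S(x_0)\;>\;-\infty.
\]
For $T=\infty$ I would instead apply \eqref{eq:diss} along an infinite-horizon optimal pair for every $t_1$ and send $t_1\to\infty$; since $S\geq 0$ this yields $\int_0^{t_1}\ell\,\mathrm{d}t\geq -S(x_0)$, and the left-hand side converges to $V_\infty(x_0)$, so $V_\infty(x_0)\geq -S(x_0)>-\infty$. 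This reproduces the lower bound of Lemma \ref{lem:strgOpt} while dispensing with Assumption \ref{ass:reach}.

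For the converse (ii) $\Rightarrow$ (i) the natural candidate storage function is the available storage \eqref{eq:availStorage}, $S(x_0)\doteq S^a_\ell(x_0)=\sup_{T\geq 0}\bigl(-V_T(x_0)\bigr)$. Non-negativity is immediate on taking $T=0$. For the dissipation inequality I would run the classical Willems argument: the finite-horizon Bellman principle (equivalently the semigroup property \eqref{eq:semigroup_prop}) shows that the optimal tail of an optimal pair is again optimal, so prepending the optimal transfer from $x_0$ to $x^\star(t_1)$ to any optimal continuation from $x^\star(t_1)$ is admissible in the supremum defining $S^a_\ell(x_0)$; comparing the two suprema then gives $S^a_\ell(x^\star(t_1))-S^a_\ell(x_0)\leq \int_0^{t_1}\ell\,\mathrm{d}t$, which is exactly \eqref{eq:diss}.

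The step I expect to be the main obstacle is proving that $S^a_\ell$ is finite on $\mbb{X}_0$, i.e. that $\inf_{T\in\mbb{R}^+}V_T(x_0)>-\infty$. Statement (ii) only asserts finiteness of each individual $V_T$, so the real work is to upgrade this to a bound uniform in $T$, and this is precisely where the inclusion $T=\infty$ in (ii) must be used. My route would be to compare finite and infinite horizons through the dynamic programming principle: extending the finite-horizon optimal trajectory by the infinite-horizon optimal control issued from its terminal state yields $V_\infty(x_0)\leq V_T(x_0)+V_\infty(x^\star_T(T))$, hence $-V_T(x_0)\leq V_\infty(x^\star_T(T))-V_\infty(x_0)$. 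Finiteness of $S^a_\ell$ thus reduces to controlling $V_\infty$ along the reachable set, and I anticipate the delicate point to be ruling out the scenario $V_T\to-\infty$ while $V_\infty>-\infty$ --- intuitively a sequence of increasingly profitable finite-horizon trajectories ought to force $V_\infty=-\infty$, contradicting (ii), but making this rigorous without an explicit reachability hypothesis is the crux.
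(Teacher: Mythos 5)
Your proposal follows the same two-step route as the paper, and the difficulty you flag at the end is not a defect of your attempt relative to the paper: it is precisely the step the paper's own proof elides. For (i) $\Rightarrow$ (ii) the paper simply cites Lemma~\ref{lem:strgOpt}, whose relevant content is the classical available-storage estimate $\sup_T\,(-V_T(x_0)) \leq S^a_\ell(x_0) \leq S(x_0)<\infty$ together with $S^a_\ell(x_0)\geq -V_\infty(x_0)$; your direct integration of \eqref{eq:diss} at $t_1=T$ (and $t_1\to\infty$ for the infinite horizon) is the same estimate, and you are right that Assumption~\ref{ass:reach} is not needed here --- in Lemma~\ref{lem:strgOpt} it is used only for the upper bound $V_T(x_0)\leq \overline v$, which statement (ii) does not assert.

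For (ii) $\Rightarrow$ (i), the paper's entire proof consists of the identity $S^a_\ell(x)=\sup_T\,(-V_T(x))$ followed by the assertion that, since (ii) gives $-V_T(x)<\infty$ for each $T\in\R^+\cup\infty$, the available storage is bounded. In other words, the passage from pointwise finiteness of $-V_T(x)$ to finiteness of the supremum over $T$ --- exactly what you call the crux --- is asserted rather than proven. Implicitly the paper is using $\liminf_{T\to\infty}V_T(x)\geq V_\infty(x)$, i.e. that finite-horizon values cannot collapse to $-\infty$ while $V_\infty(x)$ stays finite; this is what your dynamic-programming inequality $-V_T(x_0)\leq V_\infty(x^\star_T(T))-V_\infty(x_0)$ would deliver, provided $V_\infty$ is bounded above uniformly along the terminal states $x^\star_T(T)$, which is an Assumption~\ref{ass:reach}-type reachability bound. (Tellingly, the paper itself identifies $\sup_T\,(-V_T)$ with $-V_\infty$ only under the extra sign condition $\ell(z^\star(t))\leq 0$, in Lemma~\ref{lem:availStor0}.) Two further remarks: your verification that $S^a_\ell$ satisfies \eqref{eq:diss} is more careful than the paper, which skips it entirely; note only that the concatenated control is generally not optimal for OCP$_{t_1+T'}(x_0)$, hence not literally admissible ``in the supremum'' --- the correct bridge is $V_{t_1+T'}(x_0)\leq \int_0^{t_1}\ell(z^\star(t))\,\mathrm{d}t + V_{T'}(x^\star(t_1))$, which needs no optimality of the concatenation and then yields $-V_{T'}(x^\star(t_1))+\int_0^{t_1}\ell(z^\star(t))\,\mathrm{d}t\cdot(-1)\leq \sup_{T''}(-V_{T''}(x_0))$ as you intend. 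So: your (i) $\Rightarrow$ (ii) is complete and slightly sharper than the paper's; your (ii) $\Rightarrow$ (i) is incomplete, but the missing step is equally missing from the paper's proof, and your proposed repair is the natural way to close it.
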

\begin{proof}
(i) $\Rightarrow$ (ii), i.e. dissipativity $\Rightarrow$ $V_\infty(x) >- \infty$, is shown in Lemma \ref{lem:strgOpt}. (ii) $\Rightarrow$ (i) follows from
\begin{align*}
S^a_{\ell}(x) & =  \sup_T \int_{0}^{T} -\ell(z^\star(t, x))\,\mathrm{d}t = \sup_T ~-V_T(x)
\end{align*}
As (ii) states $-V_T(x) < \infty$ for all $T \in \mbb{R}^+\cup \infty$, we conclude
 boundedness of the (non-strict) available storage $S^a_{\ell}(x)$ in \eqref{eq:availStorage}.
 \end{proof}

\subsection{Infinite-horizon transversality conditions}
\begin{assumption}[Unique Lagrange multipliers] \label{ass:LICQ}
The Lagrange multipliers  $\lambda \in \Rnx, \mu\in\R^{n_g}$ in the steady-state optimization problem \eqref{eq:SOP} are unique.
\end{assumption}
We remark that \cite{Wachsmuth13} has shown that for NLPs uniqueness of multipliers is equivalent to the well-known Linear-Independence Constraint Qualification (LICQ) of non-linear programming.\footnote{It is also interesting to note that whenever $\bar z \in \inte{Z}$, which implies $\bar \mu =0$, then controllability of the Jacobian linearization (Assumption \ref{ass:ctrb}a) implies LICQ. To show this, consider new coordinates for $x$ such that the Jacobian linearization is in controllability canonical  form. }
\begin{theorem}[\eqref{eq:strDI} $+$ LICQ $\Rightarrow$ adjoint attractivity] \label{thm:adjointAttractivity}
For all $x_0\in \mbb{X}_0$,  let OCP$_T(x_0)$ be strictly dissipative at $\bar z = (\bar x, \bar u)\in \mbb{Z} $ and let Assumptions \ref{ass:reach}, \ref{ass:ctrb}b, and \ref{ass:LICQ} hold. 
Then, for all $x_0 \in \mbb{X}_0$,  the infinite-horizon adjoint $\lambda^\star(\cdot, x_0)$ satisfies
\[\displaystyle 
\lim_{t\to\infty} \lambda^\star(t, x_0) = \bar \lambda. \vspace*{-5.0mm}
\]
\end{theorem}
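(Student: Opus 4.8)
The plan is to combine the primal convergence of Theorem~\ref{thm:attractivity} with the necessary conditions \eqref{eq:NCO} and the uniqueness of the steady-state multipliers granted by Assumption~\ref{ass:LICQ}. First I would invoke Theorem~\ref{thm:attractivity}: strict dissipativity \eqref{eq:strDI} together with Assumption~\ref{ass:reach} forces $x^\star(t,x_0)\to\bar x$, while the integral bound \eqref{eq:alpha_funbnd} combined with the minimality condition in \eqref{eq:NCO} yields $z^\star(t,x_0)\to\bar z=(\bar x,\bar u)$. Recall, moreover, that the steady-state optimality system \eqref{eq:KKT} is precisely the KKT system of \eqref{eq:SOP}, whose multiplier pair $(\bar\lambda,\bar\mu)$ is, by Assumption~\ref{ass:LICQ} (i.e. LICQ), the \emph{unique} one compatible with $\bar z$. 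So the target $\bar\lambda$ is pinned down as soon as I can force the dual trajectory into the steady-state KKT set.

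The core of the argument is an $\omega$-limit analysis of the dual trajectory $(\lambda^\star(t,x_0),\mu^\star(t,x_0))$. I would show that (a) this trajectory is bounded as $t\to\infty$, and (b) every accumulation point $(\lambda_\infty,\mu_\infty)$ satisfies the steady-state system \eqref{eq:KKT} at $\bar z$. For (b), passing to the limit along a convergent subsequence in the stationarity condition $H_u=0$ of \eqref{eq:NCO_u}, using $z^\star(t)\to\bar z$, gives the $u$-stationarity in \eqref{eq:KKT}; the complementarity and sign conditions pass to the limit by continuity of $g$ and closedness of the nonnegative orthant; and the $x$-stationarity is recovered from the adjoint dynamics \eqref{eq:NCO_adjoint}, once one argues that $\dot\lambda^\star\to0$ along the relevant subsequence so that the limiting multipliers render $-H_x=0$. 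Granting (a) and (b), Assumption~\ref{ass:LICQ} leaves $(\bar\lambda,\bar\mu)$ as the only admissible accumulation point, and a bounded trajectory with a single $\omega$-limit point converges; hence $\lambda^\star(t,x_0)\to\bar\lambda$.

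The main obstacle is step (a), the boundedness of the adjoint, and this is exactly where the difficulty underlying Halkin's Example~\ref{ex:halkin} resides: run forward in time, the adjoint equation \eqref{eq:NCO_adjoint} is governed to leading order near $\bar z$ by $-A^\top=-f_x(\bar z)^\top$, so the optimality flow has a \emph{saddle} structure and boundedness cannot follow from the dynamics alone—it must be forced by optimality. My plan is to exploit the symplectic (Hamiltonian) structure obtained after eliminating $u^\star$ via \eqref{eq:NCO_u}: the linearization of the resulting state-adjoint flow at $(\bar x,\bar\lambda)$ has spectrum symmetric about the imaginary axis, and Assumption~\ref{ass:ctrb}b (stabilizability) guarantees that its stable invariant manifold has the full dimension $n_x$. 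Since the primal part $x^\star(t)$ already converges to $\bar x$, the bounded optimal state-adjoint trajectory must lie on this stable manifold, which simultaneously delivers the boundedness in (a) and the convergence of $(x^\star,\lambda^\star)$ to the equilibrium $(\bar x,\bar\lambda)$—whence $\dot\lambda^\star\to0$, closing (b). The case $\bar z\in\inte{Z}$ (so $\bar\mu=0$) is cleanest; active constraints I would handle by reducing to the Hamiltonian flow restricted to the active-constraint manifold.

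An alternative route, which I would keep in reserve, uses the semigroup property \eqref{eq:semigroup_prop}: since the tail of an infinite-horizon optimal solution is itself optimal for $\mathrm{OCP}_\infty(x^\star(t))$, one has $\lambda^\star(t,x_0)=\lambda^\star(0,x^\star(t,x_0))$, and by Lemma~\ref{lem:Vinf0} the unique optimal solution issued from $\bar x$ is the stationary one, whose adjoint is $\bar\lambda$. Continuity of the optimal adjoint in the initial condition near $\bar x$—again underwritten by stabilizability and LICQ—would then transport $x^\star(t)\to\bar x$ into $\lambda^\star(t,x_0)\to\bar\lambda$. Either way, the decisive and most delicate ingredient is ruling out the unstable adjoint directions, for which the saddle structure of \eqref{eq:NCO} and Assumption~\ref{ass:ctrb}b are essential.
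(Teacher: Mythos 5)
Your proposal shares the paper's endpoints---primal convergence via Theorem \ref{thm:attractivity} at the start, and Assumption \ref{ass:LICQ} pinning the limit to $\bar\lambda$ at the end---but the step you yourself call ``the decisive and most delicate ingredient'' has a genuine gap, in fact two. First, the claim that Assumption \ref{ass:ctrb}b alone guarantees a stable invariant manifold of dimension $n_x$ for the reduced state--adjoint flow is false: spectral symmetry of a Hamiltonian matrix about the imaginary axis does not exclude imaginary-axis eigenvalues, and stabilizability cannot rule them out. E.g.\ for $f=u$, $\ell=u^2$ (so $(A,B)=(0,1)$ is controllable), eliminating $u$ gives the nilpotent linearization $\left(\begin{smallmatrix}0 & -1/2\\ 0 & 0\end{smallmatrix}\right)$, whose stable subspace is $\{0\}$; what excludes such degeneracy is the cost via strict dissipativity, which you never invoke at this step. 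Second, and worse, your passage from ``$x^\star(t)\to\bar x$'' to ``the trajectory lies on the stable manifold'' is circular (you assume the state--adjoint trajectory is bounded, which is precisely what was to be proved), and it cannot be repaired by $x$-convergence alone. Take $f(x,u)=-x$, $\ell=x^2+u^2$: all hypotheses of Theorem \ref{thm:adjointAttractivity} hold ($(A,B)=(-1,0)$ is stabilizable, the OCP is strictly dissipative, LICQ holds with $\bar\lambda=0$), the stable manifold $\{\lambda=x\}$ has the right dimension, yet the extremal equations give $x(t)=e^{-t}x_0$ and $\lambda(t)=e^{t}(\lambda_0-x_0)+e^{-t}x_0$: \emph{every} extremal has $x\to\bar x$, while $\lambda$ diverges unless $\lambda_0=x_0$, because the unstable direction of the saddle is purely an adjoint direction---it stems exactly from the uncontrollable-but-stable mode that stabilizability permits. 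Nothing in your argument distinguishes the optimal adjoint from these divergent extremals; some optimality-based information (e.g.\ $\lambda^\star=\nabla V_\infty(x^\star)$ as in \eqref{eq:AdjointNablaV}, or the paper's use of the algebraic constraint $0=H_u$) must be brought in, and you never do so.

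There are also mismatches with the theorem's hypotheses: eliminating $u^\star$ from \eqref{eq:NCO_u} via the implicit function theorem needs $\det H_{uu}\neq0$, which Theorem \ref{thm:adjointAttractivity} does not assume (the paper reserves that regularity for Proposition \ref{prop:adjointTrans} and Assumption \ref{ass:regularOCP}), and the active-constraint case $\bar z\in\mbb{Z}\setminus\inte{Z}$, which the theorem allows, is only waved at. For contrast, the paper's proof uses none of the symplectic machinery: it freezes the primal at $\bar z$, reads \eqref{eq:NCO_adjoint}--\eqref{eq:NCO_u} as the linear differential-algebraic system \eqref{eq:adjoint_dyn} in $(\lambda,\mu)$ with the constant ``output'' \eqref{eq:adjoint_dyn_output}, deduces convergence of $\lambda^\star$ to some equilibrium $\tilde\lambda$ from stabilizability of $(A,B)$ via duality (detectability of $(-A^\top,B^\top)$), and only then invokes LICQ to identify $\tilde\lambda=\bar\lambda$. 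Your fallback route (semigroup property \eqref{eq:semigroup_prop} plus continuity of $x_0\mapsto\lambda^\star(0,x_0)$ near $\bar x$) likewise hinges on regularity of $\nabla V_\infty$ that is not among this theorem's hypotheses and is only assumed by the paper in later results such as Theorem \ref{thm:adjointConv}; so it cannot close the gap either.
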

\begin{proof}
For the sake of contradiction, assume  that while the optimal primal variables converge $z^\star(t, x_0) \to \bar z$, the adjoint would not. Upon primal convergence, 
the adjoint dynamics \eqref{eq:NCO_adjoint} with ``output'' \eqref{eq:NCO_u} read
\begin{subequations}\label{eq:adjoint_dyn}
\begin{align} 
\dot \lambda^\star &= - f_x^\top \lambda^\star - \ell_x - g_x^\top \mu^\star \\
0 &= \phantom{-} f_u^\top \lambda^\star + \ell_u + g_u^\top \mu^\star. \label{eq:adjoint_dyn_output}
\end{align}
\end{subequations}
Since the optimal pair $z^\star(t) = \bar z =  (\bar x, \bar u)$ is at steady state, so is the multiplier $\mu = \bar \mu$. Hence, the adjoint $\lambda$ has to evolve in the subspace of $\Rnx$ spanned by \eqref{eq:adjoint_dyn_output}.
Observe that stabilizability of $(f_x, f_u) \doteq (A,B)$ implies detectability of $(-A^\top, B^\top)$. Hence, the adjoints converge
\[
\lim_{t\to\infty} \lambda^\star(t) = \tilde\lambda
\]
to some equilibrium $\tilde\lambda$.
Assumption \ref{ass:LICQ}, i.e. LICQ, implies that $ \tilde\lambda=\bar\lambda$ is the unique steady state solution to \eqref{eq:adjoint_dyn}, hence the assertion follows. %
\end{proof}
\begin{theorem}[Gradients of $V_\infty$ and $S$] \label{thm:gradV} ~\\
Let OCP$_T(x_0)$ be strictly dissipative at $\bar z = (\bar x, \bar u)\in\inte{Z} $, let $\ell(\bar z) = 0$, and let Assumptions \ref{ass:reach},  \ref{ass:ctrb}b, and \ref{ass:LICQ} hold. If $V_\infty$ is differentiable at $\bar x$, then 
\[
\nabla V_\infty(\bar x) = \bar\lambda= -\nabla S(\bar x),
\]
where $S$ is any differentiable storage function which certifies strict dissipativity of OCP$_T(x_0)$ with respect to $\bar z$. \vspace*{-2mm}
\end{theorem}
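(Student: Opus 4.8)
The plan is to establish the two claimed equalities separately: first $\nabla V_\infty(\bar x) = -\nabla S(\bar x)$, which is a first-order stationarity condition for the candidate Lyapunov function $W$ from \eqref{eq:W}, and second $\nabla V_\infty(\bar x) = \bar\lambda$, which identifies the gradient of the value function with the unique steady-state multiplier via the adjoint relation \eqref{eq:AdjointNablaV}. Chaining the two then gives $\nabla V_\infty(\bar x) = \bar\lambda = -\nabla S(\bar x)$.

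For the first equality I would revisit $W(x) = V_\infty(x) + S(x) - S(\bar x)$. Evaluating the strict dissipation inequality \eqref{eq:strDI} along the optimal solution from $x_0$ and sending $t_1\to\infty$, while using Theorem \ref{thm:attractivity} to replace the terminal state by $\bar x$, yields $0 \leq \int_0^\infty \alpha_\ell(\|z^\star(t,x_0)-\bar z\|)\,\mathrm{d}t \leq W(x_0)$, so $W \geq 0$ on a neighborhood of $\bar x$. By Lemma \ref{lem:Vinf0} we have $V_\infty(\bar x)=0$, hence $W(\bar x)=0$, so $\bar x$ is a local minimizer of $W$. Since $\bar z \in \inte{Z}$, the point $\bar x$ is interior to the domain, and since $V_\infty$ and $S$ are both differentiable at $\bar x$ by hypothesis, $W$ is differentiable there with $\nabla W(\bar x) = \nabla V_\infty(\bar x) + \nabla S(\bar x) = 0$, which is exactly $\nabla V_\infty(\bar x) = -\nabla S(\bar x)$.

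For the second equality I would argue that the infinite-horizon adjoint started at $\bar x$ is the constant $\bar\lambda$. Lemma \ref{lem:Vinf0} shows the optimal response from $x_0=\bar x$ is stationary, i.e. $x^\star(t,\bar x)=\bar x$ and $u^\star(t,\bar x)=\bar u$ a.e.; the semigroup property \eqref{eq:semigroup_prop} together with time-invariance then forces $\lambda^\star(\cdot,\bar x)$ to be constant, and Theorem \ref{thm:adjointAttractivity} identifies this constant as $\bar\lambda$. Equivalently, LICQ (Assumption \ref{ass:LICQ}) makes $\bar\lambda$ the unique solution of the stationary adjoint relations \eqref{eq:adjoint_dyn} at $\bar z$. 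Because $\bar z \in \inte{Z}$ no state constraints are active at $\bar x$, so the gradient-adjoint identity \eqref{eq:AdjointNablaV} applies; evaluating it at $t=0$, $x_0=\bar x$ (using that $V_\infty$ is time-invariant and differentiable at $\bar x$) gives $\nabla V_\infty(\bar x) = \lambda^\star(0,\bar x) = \bar\lambda$.

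The main obstacle is the rigorous use of \eqref{eq:AdjointNablaV} at the single point $\bar x$: the identity $\nabla V = \lambda^\star$ is classically derived under $\mcl{C}^1$ regularity of the value function, whereas here only differentiability \emph{at} $\bar x$ is assumed. I would therefore justify the pointwise identity by restricting to the stationary solution, where constancy of both the trajectory and the adjoint renders the sensitivity argument local and exact, or by invoking a sub/superdifferential characterization that collapses to a genuine gradient precisely because $V_\infty$ is differentiable at $\bar x$. The remaining care is to confirm that $\bar x$ is interior, so that the unconstrained first-order condition $\nabla W(\bar x)=0$ is legitimate; this is exactly where the hypothesis $\bar z \in \inte{Z}$ is essential.
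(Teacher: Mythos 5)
Your proof is correct, but it is organized differently from the paper's, and for one of the two equalities you supply an argument where the paper only supplies a citation. The paper proves the \emph{left} identity $\nabla V_\infty(\bar x) = \bar\lambda$ exactly as you do in your second step: Lemma \ref{lem:Vinf0} gives the stationary optimal solution $z^\star(\cdot,\bar x)\equiv(\bar x,\bar u)$, Assumptions \ref{ass:ctrb}b and \ref{ass:LICQ} pin the (constant) adjoint down as the unique multiplier $\bar\lambda$, and \eqref{eq:AdjointNablaV} transfers this to $\nabla V_\infty(\bar x)$. For the \emph{right} identity $\bar\lambda = -\nabla S(\bar x)$, however, the paper does not argue at all: it cites \cite[Theorem 4]{kit:zanon18a}. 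You instead prove $\nabla V_\infty(\bar x) = -\nabla S(\bar x)$ directly by an interior-minimum argument on $W$ from \eqref{eq:W}: the integrated form of \eqref{eq:strDI} (the same inequality the paper uses in the proofs of Lemma \ref{lem:Vinf0} and Theorem \ref{thm:asympStab}) gives $W\geq 0$ near $\bar x$, Lemma \ref{lem:Vinf0} gives $W(\bar x)=0$, and since $\bar z\in\inte{Z}$ makes $\bar x$ interior and both $V_\infty$ and $S$ are differentiable there, the first-order condition yields $\nabla W(\bar x)=0$. This is sound (differentiability of $S$ at $\bar x$ gives the continuity needed to pass $t_1\to\infty$ in \eqref{eq:strDI}, and convergence of $\int_0^{t_1}\ell\,\mathrm{d}t$ to $V_\infty(x_0)$ is used throughout the paper), and it buys a self-contained proof that works for \emph{any} differentiable certifying storage function, plus a conceptual explanation of why $W$ is a viable Lyapunov candidate; the paper's citation buys brevity and a statement about $\nabla S(\bar x)$ that does not pass through $V_\infty$. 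Your invocation of Theorem \ref{thm:adjointAttractivity} is legitimate since its hypotheses are among those of the present theorem, though strictly you need the constancy of $\lambda^\star(\cdot,\bar x)$ (via uniqueness of the stationary primal--dual tuple, as the paper asserts), not merely its limit, to evaluate at $t=0$. Finally, the regularity caveat you flag---that \eqref{eq:AdjointNablaV} is stated under $\mcl{C}^1$ regularity of the value function while the theorem only assumes differentiability at the single point $\bar x$---is a genuine gap that the paper's own proof silently shares; your proposed repair via super/subdifferentials collapsing to a gradient at points of differentiability is the standard way to close it.
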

\begin{proof}
The left hand side assertion follows from the second statement in Lemma \ref{lem:Vinf0}, $z_\infty^\star(t,\bar x) \equiv (\bar x, \bar u)$ combined with Assumptions \ref{ass:ctrb}b and \ref{ass:LICQ}.
That is,  Lemma \ref{lem:Vinf0} implies that the stationary tuple $(\bar x, \bar u, \bar \lambda, 0)$ is the unique optimal solution to  OCP$_\infty(\bar x)$.  The right hand side has been shown in \cite[Theorem 4]{kit:zanon18a}. \vspace*{-2mm}
\end{proof}

\begin{example}[Dissipativity of Halkin's example]\label{ex:Halkin2}
The optimal state response in Halkin's example is
\[
x^\star(t) = e^{-t}x_0 +1-e^{-t}. 
\]
It is easily verified that 
\[
V_\infty(x_0) = x_0 - \lim_{t\to\infty}x^\star(t, x_0)= x_0-1.
\]
Hence, according to  Theorem \ref{thm:VinfBnd} Halkin's example is a dissipative OCP. 
Moreover, the differential counterpart to \eqref{eq:diss} reads
$
\nabla S\cdot (1-x)u \leq -(1-x)u - \ell(\bar z)$.
It is obvious that the optimal steady-state performance implies $ \ell(\bar z)=0$.
Hence $S(x) = -x +1$ is a storage function for all $x \in [0,1]$.
Indeed this is the available storage, since
\[
S^a_\ell(x_0) = \sup_T ~ x^\star(T,x_0) -x_0 = 1-x_0 = -V_\infty(x_0).\vspace*{-2mm}
\]
However, the corresponding steady-state minimizer $\bar z$ is not unique in this case as any $\bar x = 1, \bar u \in [0,1]$ as well as $\bar x \in [0,1], \bar u = 0$ achieves $\ell(\bar z) = 0$. This implies that Halkin's example can not satisfy any strict dissipation inequality \eqref{eq:strDI}, since at steady state
\eqref{eq:strDI} reads
$
\alpha_\ell(\|z-\bar z\|) \leq \ell(z) -\ell(\bar z)$,
i.e., strictness of  \eqref{eq:strDI} implies uniqueness of $\bar z$. \vspace*{-2mm}
\end{example}

At this point two questions are natural: Do the adjoints inherit the stability properties of the primal variables? Moreover, what can be said if LICQ in the steady-state optimization problem \eqref{eq:SOP} does not hold?
\begin{theorem}[\eqref{eq:strDI} $\Rightarrow$ exp. conv. of adjoints] \label{thm:adjointConv}~\\
Let the assumptions and conditions of Theorem \ref{thm:asympStab} hold, let OCP$_T(x_0)$ be strictly dissipative at $\bar z\in\inte{Z}$ with polynomial $\alpha_\ell$, and suppose polynomial bounds on $V_\infty$ and $S$  via Assumption \ref{ass:polynomBndW}. Furthermore, let $V_\infty$ have a locally Lipschitz continuous gradient on a compact subset $\tilde{\mbb{X}}_0 \subseteq \mbb{X}$ with 
$\bar x\in \mathrm{int}\,\tilde{\mbb{X}}_0$. 

Then, there exist $C_\lambda >0, \rho >0$ such that for all $x_0 \in \tilde{\mbb{X}}_0$,
\[
\|\lambda^\star(t, x_0) -\bar\lambda\| \leq C_\lambda e^{-\rho t}. 
\]
\TF{If additionally, $V_\infty$ has a locally bi-Lipschitz gradient\footnote{A function $f:\Rnx\to\R$ is locally bi-Lipschitz if there exists $L\in\R^+$ such that $\frac{1}{L}\|x-y\| \leq \|f(x) -f(y)\| \leq L \|x-y\|$ holds locally.}  then the optimal infinite-horizon adjoints $\lambda^\star(\cdot, x_0)$ are locally uniformly asymptotically stable at $\bar\lambda$.}
\end{theorem}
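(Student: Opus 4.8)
The plan is to exploit the pointwise identity between the adjoint and the gradient of the value function. Since $\bar z \in \inte{Z}$, no constraints are active in a neighborhood of $\bar x$, so \eqref{eq:AdjointNablaV} yields $\lambda^\star(t, x_0) = \nabla V_\infty(x^\star(t, x_0))$ along the optimal infinite-horizon trajectory, while Theorem \ref{thm:gradV} supplies $\nabla V_\infty(\bar x) = \bar\lambda$. The first assertion then reduces to a one-line estimate: by local Lipschitz continuity of $\nabla V_\infty$ on $\tilde{\mbb{X}}_0$ with some constant $L$,
\[
\|\lambda^\star(t, x_0) - \bar\lambda\| = \|\nabla V_\infty(x^\star(t, x_0)) - \nabla V_\infty(\bar x)\| \leq L\,\|x^\star(t, x_0) - \bar x\|,
\]
and Proposition \ref{prop:expReach}---whose hypotheses are precisely the conditions of Theorem \ref{thm:asympStab} together with polynomial $\alpha_\ell$ and the polynomial bounds of Assumption \ref{ass:polynomBndW}---delivers $\|x^\star(t, x_0) - \bar x\| \leq C e^{-\rho t}$ uniformly over $x_0 \in \tilde{\mbb{X}}_0$. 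Setting $C_\lambda \doteq LC$ closes the exponential bound.

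For the second assertion I would upgrade the pointwise relation $\lambda = \nabla V_\infty(x)$ to a genuine change of coordinates. The local bi-Lipschitz hypothesis gives, on a neighborhood of $\bar x$,
\[
\tfrac{1}{L}\|x - \bar x\| \leq \|\nabla V_\infty(x) - \bar\lambda\| \leq L\,\|x - \bar x\|,
\]
so $\Phi \doteq \nabla V_\infty$ is injective (by the lower bound) and, being bi-Lipschitz, a homeomorphism onto its image with Lipschitz inverse. Defining $\psi_t \doteq \Phi \circ \phi_t \circ \Phi^{-1}$, where $\phi_t: x_0 \mapsto x^\star(t, x_0)$ is the (well-defined, by the semigroup property \eqref{eq:semigroup_prop}) primal flow, the adjoint trajectory is exactly the orbit $\psi_t(\Phi(x_0)) = \Phi(\phi_t(x_0)) = \lambda^\star(t, x_0)$. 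Thus $\psi_t$ is topologically conjugate to $\phi_t$ and carries the equilibrium $\bar x$ to $\bar\lambda = \Phi(\bar x)$. Since local uniform asymptotic stability is invariant under homeomorphic conjugacy, the local asymptotic stability of $\bar x$ established in Theorem \ref{thm:asympStab} transfers to $\bar\lambda$.

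Concretely, I would spell out the stability part with the usual $\varepsilon$--$\delta$ bookkeeping: given $\varepsilon > 0$, pick $\varepsilon'$ with $L\varepsilon' < \varepsilon$; local primal stability furnishes $\delta'$ so that $\|x_0 - \bar x\| < \delta'$ forces $\|x^\star(t, x_0) - \bar x\| < \varepsilon'$ for all $t \geq 0$; the lower bi-Lipschitz bound converts an adjoint ball $\|\lambda^\star(0, x_0) - \bar\lambda\| < \delta'/L \doteq \delta$ into the state ball $\|x_0 - \bar x\| < \delta'$, and the upper bound then gives $\|\lambda^\star(t, x_0) - \bar\lambda\| \leq L\|x^\star(t, x_0) - \bar x\| < \varepsilon$ for all $t \geq 0$. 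Attractivity is already supplied by the first assertion and uniformity over $\tilde{\mbb{X}}_0$ follows from the uniform constants $C, \rho$; together these yield local uniform asymptotic stability at $\bar\lambda$.

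The main obstacle is the second assertion: the adjoint is not an autonomous state but is slaved to the primal state through $\nabla V_\infty$, so ``stability of $\lambda^\star$'' only becomes well-posed once $\nabla V_\infty$ is known to be a local homeomorphism. The bi-Lipschitz hypothesis is exactly what promotes the one-sided Lipschitz estimate of the first part to an invertible coordinate transformation. Care is then needed to ensure that the neighborhoods on which injectivity of $\Phi$, the primal stability estimate of Theorem \ref{thm:asympStab}, and the gradient identity \eqref{eq:AdjointNablaV} simultaneously hold can be intersected into a single neighborhood of $\bar x$ on which the conjugacy argument runs.
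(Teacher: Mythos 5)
Your proof is correct and follows essentially the same route as the paper: the exponential bound is obtained, exactly as in the paper's proof, from the gradient identity \eqref{eq:AdjointNablaV}, local Lipschitz continuity of $\nabla V_\infty$ (with $\nabla V_\infty(\bar x)=\bar\lambda$), and the uniform exponential reachability bound of Proposition \ref{prop:expReach}. Your conjugacy framing of the second assertion is a repackaging of the paper's argument, which likewise sandwiches the adjoint deviation using the lower bi-Lipschitz bound (adjoint ball $\to$ state ball, i.e. $\|x_0-\bar x\|\leq L_{\partial V}\|\lambda_0-\bar\lambda\|$) and the upper Lipschitz bound together with local exponential primal stability, yielding the same uniform asymptotic stability conclusion.
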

\begin{proof}
Recall that locally around $\bar x$, Theorem \ref{thm:asympStab} supposes that $V_\infty$ is of class $\mcl{C}^1$. 
Hence, we use that 
$
\lambda^\star(t, x_0) = \nabla V_\infty(x^\star(t, x_0))$ from \eqref{eq:AdjointNablaV}. 
It follows that
\[
\lambda^\star(t,x_0) -\bar\lambda = \nabla V_\infty(x^\star(t, x_0)) - \nabla V_\infty(\bar x).
\]
Let $L_{\partial V}$ be a Lipschitz constant of $\nabla V_\infty$ on $\tilde{\mbb{X}}_0$, then
\begin{equation*}\label{eq:proofThm11}
\|\lambda^\star(t,x_0) -\bar\lambda\| \leq L_{\partial V} \|x^\star(t,x_0) - \bar x\| \leq L_{\partial V}C e^{-\rho t},
\end{equation*}
where we have used the bound from Proposition \ref{prop:expReach}. This proves the convergence statement.

\TF{Due to local exponential stability of $\bar x$, this bound can be rewritten locally as
$\|\lambda^\star(t,x_0) -\bar\lambda\|  \leq \tilde C \|x_0 - \bar x\| e^{-\rho t}$.
Bi-Lipschitzness of $\nabla V_\infty$ with constant $L_{\partial V}$ gives that on $ \|x_0 - \bar x\| \leq \delta$ it holds that $ \|x_0 - \bar x\| \leq L_{\partial V} \|\lambda_0 - \bar \lambda\|$.
Hence we obtain the local bound
\[
\|\lambda^\star(t,x_0) -\bar\lambda\|  \leq  \tilde C L_{\partial V}(\|\lambda_0 - \bar \lambda\|)  e^{-\rho t}.
\]
Invoking the definition of uniform asymptotic stability \citep[Def. 4.4]{Khalil02} concludes the proof.}
\end{proof}
It remains to analyze the case without LICQ. 
Let $(\tilde x, \tilde u, \tilde \lambda,\tilde\mu)$ be a  (not necessarily optimal) solution to the KKT conditions \eqref{eq:KKT} of \eqref{eq:SOP}, i.e. a KKT point. 
Consider the set 
\[
\Omega \doteq \left\{(\tilde \lambda,\tilde\mu) \,|\, (\tilde x, \tilde u, \tilde \lambda,\tilde\mu) \text{ solves } \eqref{eq:KKT}\right\},
\]
i.e., the set of all dual KKT solutions to \eqref{eq:SOP}. 
Let $\Pi_\lambda: (\lambda, \mu) \mapsto \lambda$ denote the projection on the adjoints $\lambda$.
\begin{proposition}[Transversality w/o LICQ] \label{prop:adjointTrans}~\\
For all $x_0\in \mbb{X}_0$,  let OCP$_T(x_0)$ be strictly dissipative at $\bar z = (\bar x, \bar u)\in \mbb{Z}$ and suppose that Assumptions \ref{ass:reach} and  \ref{ass:ctrb}b hold. 
Moreover, suppose that there exists an optimal infinite-horizon input $u^\star(\cdot, x_0)$ absolutely continuous on $[0, \infty)$, and that 
OCP$_T(x_0)$ is non-singular at $(\bar x, \bar u)$, i.e. $\det H_{uu}(\bar x, \bar u) \not = 0$.

Then, for all $x_0 \in \mbb{X}_0$,  the infinite-horizon adjoint $\lambda^\star(\cdot, x_0)$ satisfies
$\displaystyle
\lim_{t\to\infty} \lambda^\star(t, x_0) = \tilde\lambda\in \Pi_\lambda\left(\Omega\right)$.
\end{proposition}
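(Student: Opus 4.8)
The plan is to follow the proof of Theorem \ref{thm:adjointAttractivity} as closely as possible, substituting the uniqueness of the steady-state multipliers (Assumption \ref{ass:LICQ}) by the non-singularity condition $\det H_{uu}(\bar x,\bar u)\neq 0$ and accepting a correspondingly weaker statement about the limit. First I would establish primal convergence: strict dissipativity together with Assumption \ref{ass:reach} and the assumed absolute continuity of $u^\star(\cdot,x_0)$ place us in the setting of Theorem \ref{thm:attractivity}, so that $z^\star(t,x_0)\to\bar z$. Consequently all data along the optimal arc---$f_x,f_u,\ell_x,\ell_u,g_x,g_u$ and the active set---converge to their values at $\bar z$, and by complementary slackness the multipliers of the constraints inactive at $\bar z$ vanish for all large $t$.

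Next I would close the adjoint dynamics in the limit. The non-singularity $\det H_{uu}(\bar x,\bar u)\neq 0$ allows me to solve the stationarity condition \eqref{eq:NCO_u} locally for the optimal control by the implicit function theorem, while the standing regularity of the mixed constraints (linear independence with respect to $u$ of the active $g_i$) expresses the active part of $\mu^\star$ as a continuous, affine-in-$\lambda^\star$ function of $\lambda^\star$ and $z^\star$ along the arc. Inserting these relations into \eqref{eq:NCO_adjoint} turns the adjoint equation into a self-contained, asymptotically autonomous ODE for $\lambda^\star$ whose limiting vector field is the linear time-invariant adjoint system associated with $(A,B)=(f_x,f_u)$ at $\bar z$.

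I would then transfer the convergence argument of Theorem \ref{thm:adjointAttractivity}: Assumption \ref{ass:ctrb}b (stabilizability of $(A,B)$) yields detectability of the pair $(-A^\top,B^\top)$, and since the algebraic relation \eqref{eq:adjoint_dyn_output} forces the ``output'' $B^\top\lambda^\star$ to converge, detectability upgrades this to convergence of the full adjoint, $\lambda^\star(t,x_0)\to\tilde\lambda$; the functional dependence of $\mu^\star$ on $\lambda^\star$ then yields $\mu^\star(t)\to\tilde\mu$. Passing to the limit in \eqref{eq:NCO_adjoint}, \eqref{eq:NCO_u}, the sign condition $\mu^\star\geq 0$, and complementary slackness shows that $(\bar x,\bar u,\tilde\lambda,\tilde\mu)$ solves the steady-state system \eqref{eq:KKT}, i.e. it is a KKT point of \eqref{eq:SOP}. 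Hence $(\tilde\lambda,\tilde\mu)\in\Omega$ and $\tilde\lambda\in\Pi_\lambda(\Omega)$, as claimed.

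The main obstacle is exactly the loss of LICQ: without it the steady-state duals are non-unique, so the limit can no longer be identified with a distinguished $\bar\lambda$ and the argument must settle for membership in $\Pi_\lambda(\Omega)$. The technical crux is to secure a genuine limit of $(\lambda^\star,\mu^\star)$ rather than mere boundedness or a non-trivial $\omega$-limit set, and this is where non-singularity performs the role previously played by uniqueness: it regularizes the reduced Hamiltonian flow so that $u^\star$, and with the constraint regularity also $\mu^\star$, are smooth functions of the adjoint, letting the detectability-based convergence of $\lambda^\star$ carry over to $\mu^\star$. A secondary point I would verify is that the active set stabilizes as $z^\star\to\bar z\in\mbb{Z}$, i.e. that no chattering occurs, which follows from convergence of the primal arc and the regularity of the constraints.
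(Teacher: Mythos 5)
Your proposal is correct and follows essentially the same route as the paper's proof: primal convergence via Theorem \ref{thm:attractivity}, the adjoint dynamics \eqref{eq:adjoint_dyn} viewed as a linear system with ``output'' \eqref{eq:adjoint_dyn_output}, the implicit function theorem enabled by $\det H_{uu}(\bar x,\bar u)\neq 0$ to tie $u^\star$ (and $\mu^\star$) to the adjoints, detectability of $(-f_x^\top,f_u^\top)$ from Assumption \ref{ass:ctrb}b to force convergence of the full adjoint, and identification of the limit as a dual KKT point so that $\tilde\lambda\in\Pi_\lambda(\Omega)$. Your additional remarks (explicitly checking that $(\bar x,\bar u,\tilde\lambda,\tilde\mu)$ satisfies \eqref{eq:KKT}, and that the active set stabilizes) merely flesh out steps the paper leaves implicit, and do not constitute a different argument.
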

\begin{proof}
The proof is similar to the one of Theorem \ref{thm:adjointAttractivity}. Assume that the primal variables converge due to strict dissipativity (see Theorem \ref{thm:attractivity}), while the adjoints do not.
Then the adjoint dynamics \eqref{eq:NCO_adjoint} with ``output'' \eqref{eq:NCO_u} are given by \eqref{eq:adjoint_dyn}.
Due to convergence  $z^\star(t) \to\bar z =  (\bar x, \bar u)$ the multiplier function satisfies $\mu \to \tilde \mu$. 

Note that the regularity of OCP$_T(x_0)$---$\det H_{uu}(\bar x, \bar u) \not = 0$---implies via the implicit function theorem that locally around $\bar z = (\bar x, \bar u)\in \mbb{Z}$, $u^\star(t,x_0)$ is a continuous function of $\lambda^\star(t,x_0), \mu^\star(t,x_0)$ as long as no changes in the active set occur. 
For the primal variables to stay at steady state, the adjoints have to be at least partially at steady state.
More precisely, all adjoints observable through $f_u^\top \doteq B^\top$ are at steady state, while the remaining adjoint modes have to be asymptotically stable (due to detectability of $(-f_x^\top, f_u^\top) \doteq (-A^\top, B^\top)$, cf. \eqref{eq:adjoint_dyn}. Hence 
$
\lim_{t\to\infty} \lambda^\star(t) = \tilde\lambda, ~ \tilde\lambda \in \Pi_\lambda(\Omega)$. 
\end{proof}
 
\subsection{Equivalence of OCP Dissipativity and Stability}
Finally, it remains to answer the question for equivalence of strict dissipativity and dual/adjoint stability. 
\begin{assumption}[$x-u$ regularity of OCP$_T(x_0)$] \label{ass:regularOCP}
The primal Hessian of the Hamiltonian $H$ of OCP$_T(x_0)$ defined in \eqref{eq:H}  is such that
\[
\det \nabla^2 H = \begin{vmatrix}
H_{xx} & H_{ux}\\ H_{xu} & H_{uu}
\end{vmatrix} \not = 0 
\] holds at $(1,\bar \lambda, \bar \mu, \bar x, \bar u)$.%
\end{assumption}
 We remark that the above condition is less strict than the one imposed by	 \cite{Trelat15a}.
 The next example shows that  this local regularity condition differs from the usual local non-singularity of OCPs (which would be $\det H_{uu} \not = 0$).
\begin{example}[$x-u$ regularity of OCPs]~\\
Consider any interior point $(x,u)\in\inte{Z}$ for Halkin's example. We obtain
$
\det \nabla^2 H = -( \bar\lambda-1)^2, 
$
which is non-singular for $\bar \lambda \not = 1$. However, in Example \ref{ex:halkin} we have shown that $\lambda^\star(t) \equiv \lambda_0^\star = 1$. It is easy to see that in Halkin's example $\ell(x,u) = -f(x,u)$ implies the  choice $\bar \lambda = 1$ (provided the normalization $\lambda_0 = 1$).
Hence, for $\bar \lambda = 1$ Halkin's example  it is not $x-u$ regular. 

The optimal fish harvest discussed in Example \ref{ex:fish} satisfies LICQ at $\bar z$. We obtain $\det \nabla^2 H =-\left(\frac{b}{\bar x}\right)^2$.
Observe that for this example $x-u$ regularity is satisfied, while the OCP as such is singular, i.e. $\det H_{uu} = 0$. %
\end{example}

\begin{proposition}[Exp. conv. of duals $\Rightarrow$ \eqref{eq:strDI}] \label{prop:dstabDI}
Consider OCP$_\infty(x_0)$ with problem data being at least locally $\mcl{C}^2$ in $x$ and $u$, and 
$\bar z \in \inte{Z}$. Let Assumption \ref{ass:LICQ} and \ref{ass:regularOCP} hold and suppose that
there exist constants $C_\lambda >0, \rho_\lambda >0, C_\mu >0, \rho_\mu >0$ such that for all $x_0$ from an open neighborhood $\mcl{B}(\bar x)$,
\begin{subequations} \label{eq:expDualStab}
\begin{align}
\|\lambda^\star(t, x_0) -\bar\lambda\| &\leq C_\lambda e^{-\rho_\lambda t}\\
\|\mu^\star(t, x_0) -\bar\mu\| &\leq C_\mu e^{-\rho_\mu t}. \label{eq:expDualStab_mu}
\end{align}
\end{subequations}
Then, for all $x_0\in \mcl{B}(\bar x)$, OCP$_T(x_0)$ is strictly dissipative  at $\bar z = (\bar x, \bar u)$ with polynomial $\alpha_\ell(\|z -\bar z\|)$.%
\end{proposition}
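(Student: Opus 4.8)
The plan is to reduce the statement to a purely primal one and then invoke the primal converse Proposition~\ref{prop:pstabDI}: if I can show that both $x^\star(\cdot,x_0)$ and $u^\star(\cdot,x_0)$ converge to $\bar x,\bar u$ exponentially, uniformly in $x_0\in\mathcal{B}(\bar x)$, then—since the data are locally $\mathcal{C}^2$, hence $\ell$ is locally Lipschitz, and $\ell(\bar z)=0$—Proposition~\ref{prop:pstabDI} delivers strict dissipativity with a polynomial (indeed linear) $\alpha_\ell$. So the whole burden is to deduce exponential \emph{primal} convergence from the exponential \emph{dual} convergence \eqref{eq:expDualStab}. First I would localize: since $\bar z\in\inte{Z}$, the inequality constraints are inactive at $\bar z$, so $\bar\mu=0$, and by continuity together with \eqref{eq:expDualStab_mu} they stay inactive along the tail of the optimal trajectory; on this neighborhood I may therefore work with the unconstrained first-order system \eqref{eq:NCO_x}--\eqref{eq:NCO_u}.

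Next I would exploit Assumption~\ref{ass:regularOCP}. At $(\bar x,\bar u,\bar\lambda,\bar\mu)$ one has $H_x=H_u=0$ by the steady-state conditions \eqref{eq:KKT}, and the Jacobian of the map $(x,u)\mapsto(H_x,H_u)$ is exactly the primal Hessian $\nabla^2 H$, which is invertible. The implicit function theorem then supplies a $\mathcal{C}^1$ (hence locally Lipschitz) map $\Phi$ with $\Phi(\bar\lambda,\bar\mu,0,0)=(\bar x,\bar u)$ solving $H_x=r_x$, $H_u=r_u$ for $(x,u)$ when duals are near $(\bar\lambda,\bar\mu)$ and residuals are near $0$. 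Along the optimal solution \eqref{eq:NCO} gives $H_u\equiv 0$ and $H_x=-\dot\lambda^\star$, so
\[
\bigl(x^\star(t),u^\star(t)\bigr)=\Phi\bigl(\lambda^\star(t),\mu^\star(t),-\dot\lambda^\star(t),0\bigr),
\]
and Lipschitz continuity of $\Phi$ bounds the primal error by the dual error plus $\|\dot\lambda^\star(t)\|$. Note this uses only $x$--$u$ regularity, not the stronger $\det H_{uu}\neq 0$, which is essential because the interesting examples (Example~\ref{ex:fish}) are singular.

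The main obstacle is to control $\dot\lambda^\star$. Substituting $\dot\lambda^\star=-H_x$ back into the Lipschitz estimate reintroduces the primal error on the right-hand side, and the resulting self-referential bound carries a contraction constant that need not be below one, so a naive Lipschitz bootstrap fails. I would resolve this via the hyperbolic structure of the linearized optimality system. The combination of Assumption~\ref{ass:LICQ} (LICQ, i.e.\ $[f_x\ f_u]$ of full row rank $n_x$ at $\bar z$) with Assumption~\ref{ass:regularOCP} makes $\bar z$ a non-degenerate second-order KKT point of the steady-state problem \eqref{eq:SOP}; this is precisely the condition under which the linearization of \eqref{eq:NCO_x}--\eqref{eq:NCO_u} about $(\bar x,\bar u,\bar\lambda)$ is a regular Hamiltonian differential-algebraic pencil with no eigenvalues on the imaginary axis. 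Its stable and unstable subspaces are each $n_x$-dimensional and couple the primal and the adjoint coordinates, so a solution whose adjoint part remains bounded as $t\to\infty$—which \eqref{eq:expDualStab} guarantees—can carry no unstable component and must lie on the stable manifold. On that manifold every coordinate, in particular $\lambda^\star$ together with its derivative $\dot\lambda^\star$, decays at a common exponential rate, and the stable-manifold theorem transfers this conclusion to the nonlinear system on a neighborhood of $\bar z$.

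Having secured $\|\dot\lambda^\star(t,x_0)\|\le C e^{-\rho t}$, the implicit-function estimate yields
\[
\|z^\star(t,x_0)-\bar z\|\le L_\Phi\bigl(\|\lambda^\star(t,x_0)-\bar\lambda\|+\|\mu^\star(t,x_0)-\bar\mu\|+\|\dot\lambda^\star(t,x_0)\|\bigr)\le C' e^{-\rho' t}
\]
uniformly for $x_0\in\mathcal{B}(\bar x)$. In particular $x^\star\to\bar x$ and $u^\star\to\bar u$ exponentially and \eqref{eq:sys} is locally exponentially stable, so Proposition~\ref{prop:pstabDI} applies and gives strict dissipativity of OCP$_T(x_0)$ at $\bar z$ with polynomial $\alpha_\ell$; equivalently, one checks directly that the uniform exponential bound makes $\int_0^\infty\|z^\star-\bar z\|\,\mathrm{d}t$ finite, so that the available storage \eqref{eq:strAvailStorage} is finite for the linear choice $\alpha_\ell(s)=c\,s$. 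This is the exact converse of the forward implication in Theorem~\ref{thm:adjointConv}.
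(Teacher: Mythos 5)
Your proof has the same skeleton as the paper's: use Assumptions~\ref{ass:LICQ} and~\ref{ass:regularOCP} to apply the implicit function theorem to the system $\dot\lambda^\star + H_x = 0$, $H_u = 0$, so that $\|z^\star(t,x_0)-\bar z\|$ is bounded by a Lipschitz constant times the size of $(\dot\lambda^\star,\,\lambda^\star-\bar\lambda,\,\mu^\star-\bar\mu)$, and then close the loop with Proposition~\ref{prop:pstabDI}. You have also correctly isolated the one delicate point: hypothesis \eqref{eq:expDualStab} bounds $\lambda^\star-\bar\lambda$ and $\mu^\star-\bar\mu$ but says nothing about $\dot\lambda^\star$, and substituting $\dot\lambda^\star=-H_x$ into the Lipschitz estimate is self-referential. (The paper's own proof is silent on exactly this: it simply places $\dot\lambda^\star$ inside the vector that is asserted to decay exponentially.)

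However, the device you introduce to control $\dot\lambda^\star$ rests on a false claim. It is not true that Assumption~\ref{ass:LICQ} together with Assumption~\ref{ass:regularOCP} forces the linearization of \eqref{eq:NCO_x}--\eqref{eq:NCO_u} at $(\bar x,\bar u,\bar\lambda)$ to be hyperbolic. Take $n_x=n_u=1$, $f(x,u)=u$, $\ell(x,u)=\tfrac12(u^2-x^2)$, no active constraints: the KKT point is $(\bar x,\bar u,\bar\lambda)=(0,0,0)$, the multiplier is unique since $[f_x\ f_u]=[0\ 1]$ has full row rank, and $\nabla^2 H=\mathrm{diag}(-1,1)$ is nonsingular, so both assumptions hold; yet eliminating $u=-\lambda$ from $H_u=0$ gives
\begin{equation*}
\dot x = -\lambda, \qquad \dot\lambda = x,
\end{equation*}
whose eigenvalues are $\pm i$, i.e.\ purely imaginary. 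The reason is that $\det\nabla^2 H\neq 0$ does not make $\bar z$ a non-degenerate \emph{minimizer} (in the example it is a saddle of \eqref{eq:SOP}); hyperbolicity of the extremal flow is an exponential-dichotomy property that requires second-order definiteness or stabilizability/detectability-type conditions, as in exponential turnpike theory \cite{Trelat15a}, and none of these are assumed here. Note also that the exponential-decay hypothesis \eqref{eq:expDualStab} concerns particular solutions and cannot alter the spectrum of the linearization, so it cannot rescue the implication. Finally, even granting hyperbolicity, your inference ``bounded adjoint part $\Rightarrow$ no unstable component'' needs the projection of the unstable subspace onto the $\lambda$-coordinates to be injective, which you do not establish (for Hamiltonian systems that subspace is naturally a graph over the $x$-coordinates, not over $\lambda$). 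So the step that was supposed to deliver the exponential bound on $\dot\lambda^\star$---and with it the uniform primal bound fed into Proposition~\ref{prop:pstabDI}---is a genuine gap in your argument.
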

\begin{proof}
We rewrite the adjoint dynamics \eqref{eq:NCO_adjoint} with ``output'' \eqref{eq:NCO_u} as an implicit system of equations
\begin{align*}
F\left(\dot\lambda^\star, \lambda^\star, \mu^\star, x^\star, u^\star\right) \doteq  \begin{bmatrix}
\dot\lambda^\star + H_x \\
H_u
\end{bmatrix}=0. 
\end{align*}
Observe that $F(0, \bar\lambda, 0, \bar x, \bar u) = 0$, where $\bar \lambda$ is unique due to LICQ. Now the condition $\det \nabla^2 H  \not = 0$ implies that 
$F(0, \bar\lambda, 0, \bar x, \bar u) = 0$ locally admits an implicit function 
$
z^\star  = F^{-1}\left(\dot\lambda^\star, \lambda^\star, \mu^\star\right)$.
Hence we obtain
\[
\|z^\star - \bar z \| = \left\|F^{-1}\left(\dot\lambda^\star, \lambda^\star, \mu^\star\right) - F^{-1}\left(0, \bar \lambda, \bar \mu\right)\right\|.
\]
As the  problem data of OCP$_\infty(x_0)$ is $\mcl{C}^2$ in $x$ and $u$, we have $ F^{-1} \in \mcl{C}^1$ and thus it is locally Lipschitz. Therefore
\[
\left\|F^{-1}\left(\dot\lambda^\star, \lambda^\star, \mu^\star\right) - F^{-1}\left(0, \bar \lambda, \bar \mu\right)\right\| \leq L_F\left\|
\begin{matrix}
\dot \lambda^\star(t, x_0) \\
\lambda^\star(t, x_0) -\bar\lambda\\
\mu^\star(t, x_0) -\bar\mu\end{matrix}
\right\|
\]
can be simplified to yield
\[
\left\|F^{-1}\left(\dot\lambda^\star, \lambda^\star, \mu^\star\right) - F^{-1}\left(0, \bar \lambda, \bar \mu\right)\right\| \leq L_F 
\tilde C e^{-\tilde \rho t}
\]
and hence
$
\|z^\star(t, x_0) - \bar z \|\leq L_F 
\tilde C e^{-\tilde \rho t}$.
Applying Proposition \ref{prop:pstabDI} yields the assertion. 
\end{proof}
Naturally, the exponential decay of the constraint multiplier $\mu$ in \eqref{eq:expDualStab_mu} is difficult to check. However, whenever for all solutions originating from $x_0$ close to $\bar x$ the active set is empty---for varying $x_0$ and along the horizon---one has that $\mu^\star(t,x_0) \equiv 0 = \bar \mu$. Likewise, pure input constraints simplify things, as in this case we can drop \eqref{eq:NCO_u} from the optimality conditions.

Next, we establish a set of conditions under which exponential stability of \eqref{eq:sys}, exponential stability of the dual variables \eqref{eq:expDualStab} and strict dissipativity of OCP$_T(x_0)$ are equivalent. 

\begin{theorem}[Local equivalence conditions]\label{thm:equivalence}~\\
Consider OCP$_T(x_0)$ with $T \in \mbb{R}^+\cup\infty$, let the problem data be at least locally $\mcl{C}^2$ in $x$ and $u$, let $\ell$ be  Lipschitz, and let
$\bar z \in \inte{Z}$. Suppose that $V_\infty$ and some storage function $S$ are $\mathcal{C}^2$ on  an open neighborhood $\mcl{B}(\bar x)$ and that  \eqref{eq:BndUopt} holds.  Furthermore, let Assumptions \ref{ass:reach}--\ref{ass:regularOCP} hold. 

Then, there exists an open neighborhood $\mcl{B}(\bar x)$ such that for all $x_0 \in \mcl{B}(\bar x)$  the following statements are equivalent:
\begin{enumerate}
\item[(i)] OCP$_T(x_0)$ is strictly dissipative with respect to $\bar z \in \inte{Z}$ and $\alpha_\ell$ polynomial. 
\item[(ii)] The optimal equilibrium $\bar x$ is exponentially stable for all infinite-horizon optimal solutions $x^\star(\cdot, x_0, u^\star(\cdot))$.
\item[(iii)] The infinite-horizon optimal adjoints $\lambda_\infty^\star(\cdot, x_0)$ converge exponentially fast to the steady-state multiplier $\bar\lambda$. \vspace*{-2mm}
\end{enumerate}
\end{theorem}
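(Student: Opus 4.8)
The plan is to establish the three-way equivalence by using statement $(i)$ as a hub and assembling the converse results already proven in this section into the pairwise equivalences $(i)\Leftrightarrow(ii)$ and $(i)\Leftrightarrow(iii)$. All four building blocks---Propositions \ref{prop:expReach} and \ref{prop:pstabDI}, Theorem \ref{thm:adjointConv}, and Proposition \ref{prop:dstabDI}---are already available, so the substance of the argument is to verify that the standing hypotheses of the present theorem (problem data locally $\mcl{C}^2$, $\ell$ Lipschitz, $\bar z\in\inte{Z}$, $V_\infty$ and $S$ of class $\mcl{C}^2$ on $\mcl{B}(\bar x)$, the input bound \eqref{eq:BndUopt}, and Assumptions \ref{ass:reach}--\ref{ass:regularOCP}) furnish exactly the premises required by each invoked result on one common neighbourhood $\mcl{B}(\bar x)$. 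Here I would note that Assumption \ref{ass:ctrb} supplies controllability of the Jacobian linearization, which also delivers the weaker stabilizability needed by the adjoint results.

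First I would prove $(i)\Leftrightarrow(ii)$. For $(i)\Rightarrow(ii)$, strict dissipativity with polynomial $\alpha_\ell$ together with the conditions of Theorem \ref{thm:asympStab} (strictness at $\bar z\in\inte{Z}$, Assumptions \ref{ass:contW} and \ref{ass:ctrb}a, and $V_\infty,S\in\mcl{C}^1$) and the polynomial bounds of Assumption \ref{ass:polynomBndW} are precisely the hypotheses of Proposition \ref{prop:expReach}, which yields the exponential convergence $\|x^\star(t,x_0,u^\star(\cdot,x_0))-\bar x\|\leq Ce^{-\rho t}$ asserted in $(ii)$. For the converse $(ii)\Rightarrow(i)$ I would invoke Proposition \ref{prop:pstabDI}: local exponential stability of \eqref{eq:sys} is $(ii)$, Lipschitz continuity of $\ell$ is assumed, and the exponential input bound \eqref{eq:BndUopt} is a standing hypothesis, so strict dissipativity with polynomial $\alpha_\ell$ follows on $\mcl{B}(\bar x)$.

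Next I would close the adjoint half $(i)\Leftrightarrow(iii)$. The implication $(i)\Rightarrow(iii)$ is Theorem \ref{thm:adjointConv}: under the conditions of Theorem \ref{thm:asympStab}, strict dissipativity with polynomial $\alpha_\ell$, and the polynomial bounds of Assumption \ref{ass:polynomBndW}, its only extra requirement is a locally Lipschitz gradient of $V_\infty$, which is immediate from $V_\infty\in\mcl{C}^2$; this gives $\|\lambda^\star(t,x_0)-\bar\lambda\|\leq C_\lambda e^{-\rho t}$, i.e. $(iii)$. For $(iii)\Rightarrow(i)$ I would appeal to Proposition \ref{prop:dstabDI}, whose premises are $\mcl{C}^2$ problem data, $\bar z\in\inte{Z}$, Assumptions \ref{ass:LICQ} and \ref{ass:regularOCP}, and exponential decay of \emph{both} duals.

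The main obstacle is precisely that $(iii)$ only asserts exponential convergence of $\lambda^\star$, whereas Proposition \ref{prop:dstabDI} additionally needs the bound \eqref{eq:expDualStab_mu} on $\mu^\star$. To close this gap I would exploit $\bar z\in\inte{Z}$: strict interiority forces $\bar\mu=0$, and by the exponential stability established above the optimal pair $z^\star(t,x_0)$ remains in a neighbourhood of $\bar z\in\inte{Z}$ for all $t$ once $x_0\in\mcl{B}(\bar x)$ is chosen small enough. A continuity argument then shows the active set stays empty along the entire horizon, so $\mu^\star(t,x_0)\equiv 0=\bar\mu$ and \eqref{eq:expDualStab_mu} holds trivially. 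Shrinking $\mcl{B}(\bar x)$ so that all four invoked results apply simultaneously and so that no constraint can reactivate during the transient is the only genuinely delicate step; everything else is the bookkeeping assembly of $(i)\Leftrightarrow(ii)$ and $(i)\Leftrightarrow(iii)$ described above, from which the claimed equivalence of $(i)$, $(ii)$ and $(iii)$ follows.
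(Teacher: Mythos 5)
Your assembly is the same as the paper's: (i)$\Rightarrow$(ii) via Proposition \ref{prop:expReach} together with Theorem \ref{thm:asympStab}; (ii)$\Rightarrow$(i) via Proposition \ref{prop:pstabDI} using the standing input bound \eqref{eq:BndUopt}; (i)$\Rightarrow$(iii) via Theorem \ref{thm:adjointConv}, with $V_\infty\in\mcl{C}^2$ supplying the locally Lipschitz gradient; and (iii)$\Rightarrow$(i) via Proposition \ref{prop:dstabDI}. The paper additionally records (ii)$\Rightarrow$(iii) ``as in the proof of Theorem \ref{thm:adjointConv}'', which is redundant given your cycle, so its omission is harmless. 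You also correctly identify the only delicate point, namely that (iii) supplies the decay of $\lambda^\star$ but not the bound \eqref{eq:expDualStab_mu} on $\mu^\star$ required by Proposition \ref{prop:dstabDI}.

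However, your patch for that point is circular. In the direction (iii)$\Rightarrow$(i) you argue that the active set stays empty ``by the exponential stability established above'', i.e. by (ii). But under hypothesis (iii) alone, (ii) is not available: it is equivalent (via the other half of your proof) to the very statement (i) you are trying to derive, so it cannot be used as an ingredient here. The paper resolves this without any appeal to primal stability: it simply notes that $\bar z\in\inte{Z}$ renders \eqref{eq:expDualStab_mu} unnecessary. The substance behind that remark is that the implicit-function argument in the proof of Proposition \ref{prop:dstabDI} is entirely local around the point $(0,\bar\lambda,0,\bar x,\bar u)$, and near the interior point $\bar z$ one has $g(\bar z)<0$, so complementary slackness forces $\mu^\star=0$ on the relevant local solution branch; hence one may take $\mu^\star\equiv 0=\bar\mu$ there without knowing in advance where the primal trajectory goes. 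Your argument should be repaired along these lines---active set empty by interiority of $\bar z$ and the restriction to the neighbourhood $\mcl{B}(\bar x)$ on which the equivalence is claimed---rather than by invoking a stability property that, at that stage of the proof, has not been established.
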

\begin{proof}
(i) $\Rightarrow$ (ii) follows from Proposition \ref{prop:expReach} combined with Theorem \ref{thm:asympStab}. 
(i) $\Rightarrow$ (iii) follows from Theorem \ref{thm:adjointConv}, where $V_\infty$ and some storage function $S$ are locally $\mathcal{C}^2$, implies local Lipschitz continuity of $\nabla V_\infty(\bar x)$. 
(iii) $\Rightarrow$ (i) is shown in Proposition \ref{prop:dstabDI} using Assumption \ref{ass:regularOCP}. Note that due to $\bar z \in \inte{Z}$ we do not need \eqref{eq:expDualStab_mu}. 
Finally, (ii) $\Rightarrow$ (i) is given by Proposition \ref{prop:pstabDI} using  \eqref{eq:BndUopt} and (ii) $\Rightarrow$ (iii) can be shown as in the proof of Theorem \ref{thm:adjointConv}.\vspace*{-2mm}
\end{proof}

\section{Discussion ~\vspace*{-2mm}} \label{sec:discussion}
\begin{figure*}
\begin{center}
\includegraphics[width=0.95\textwidth]{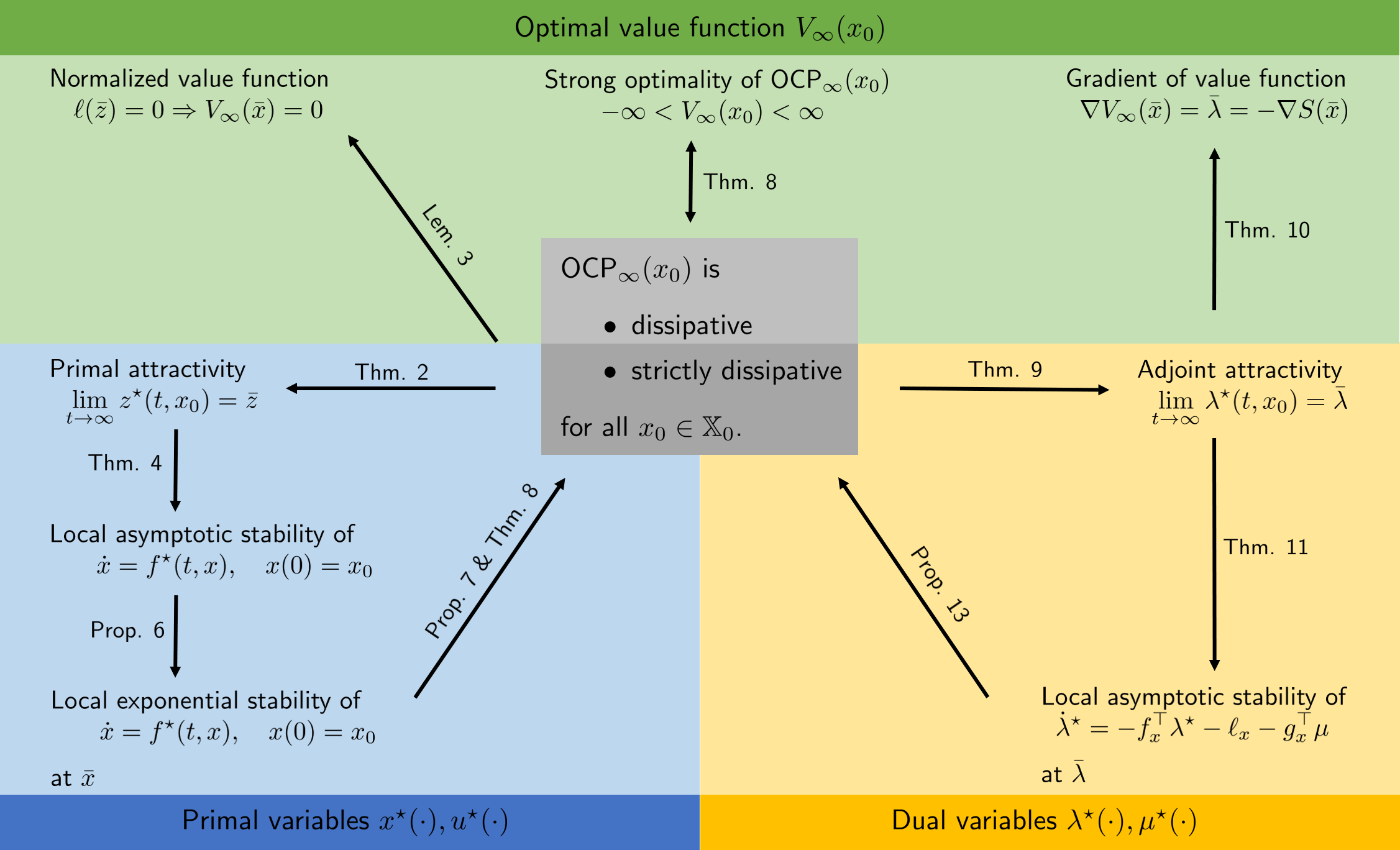}
\end{center}
\caption{Overview and relation of established results	. \label{fig:overview}}
\end{figure*}

\subsection{Overview of results~\vspace*{-3mm}}
Figure \ref{fig:overview} sketches the relations between the established results. In essence we have derived relations between strict dissipativity of OCP$_T(x_0)$, the corresponding primal variables $z^\star(\cdot) = (x^\star(\cdot), u^\star(\cdot))^\top$,  the dual variables $\lambda^\star(\cdot), \mu^\star(\cdot)$, and the value function $V_\infty(x_0)$.
Figure \ref{fig:overview} can also be viewed as a graphical illustration of the  proof of Theorem \ref{thm:equivalence}, cf. the cycles in the lower half.\vspace*{-2.0mm} 

\textit{Dissipativity and the local geometry of $V$ and $S$.~}
First observe that the simple trick to offset the stage cost by $\ell(\bar z)$ allows to establish the  equivalence of (non-strict) dissipativity of OCP$_\infty(x)$ and boundedness of $V_T(x)$ for all $T \in \mbb{R}^+\cup\infty$. In a discrete-time context a similar offset idea was used by \cite{Gruene17b} to analyse the relation of time-varying turnpikes and dissipativity.\vspace*{-3mm}

Moving to  the top right corner of Figure \ref{fig:overview}, we remark that the local characterization of the gradient of $V_\infty$ shown in  Theorem \ref{thm:gradV} has already been hinted at in approximate fashion---i.e. $\nabla V_T(\bar x) \approx \bar \lambda$---in \cite[Remark 2]{kit:zanon18a}. Here we have strengthened this to the equivalence 
$
\nabla V_\infty(\bar x) = \bar \lambda = -\nabla S(\bar x).
$
Note that this relation holds for any  storage function $S$ that is locally differentiable at $\bar x$. We also remark that the right hand side relation has already been shown by \cite{Diehl11a}.
This very useful property is the key behind the construction of the Lyapunov function in \eqref{eq:W} as it allows to compensate for the non-vanishing gradients of both $V_\infty(x)$ and $S(x)$ at $\bar x$, while $S(\bar x)$ normalizes $W(x)$ to be $0$ at $\bar x$.\vspace*{-1mm}

\textit{Transversality conditions of infinite-horizon OCPs.~}
Moving to the mid right hand side section of Figure \ref{fig:overview} we recall that 
Theorem \ref{thm:adjointAttractivity} and Theorem \ref{thm:adjointConv} establish attractivity and stability properties 
for the adjoint $\lambda$, i.e. 
$
\lim_{t\to\infty} \lambda^\star(t,x_0) = \bar\lambda. 
$
Recall that the steady-state adjoint equations correspond to the KKT conditions of the steady-state problem \eqref{eq:SOP}. Hence Theorem \ref{thm:adjointAttractivity} implies that
\begin{equation} \label{eq:adjointTrans}
\lim_{t\to\infty}  f_x^\top \lambda^\star(t) + \ell_x + g_x^\top \mu^\star(t) = 
f_x^\top \bar\lambda +\ell_x +g_x^\top\bar\mu = 0.
\end{equation}
Indeed, this equation can be seen as the infinite-horizon transversality condition in case the mixed input state constraints \eqref{eq:OCP_con} are active at $t =\infty$.
The main assumptions of Theorem \ref{thm:adjointAttractivity} fall in three categories: (i) strict dissipativity, (ii) stabilizability of the Jacobian linearization of \ref{eq:sys} at $(\bar x, \bar u)$, and (iii) LICQ (Assumption \ref{ass:LICQ}) in the steady-state problem \eqref{eq:SOP}. The latter two properties are needed to analyze the dynamics of the adjoints. 
Note that Proposition \ref{prop:adjointTrans}, not depicted in Figure \ref{fig:overview}, relaxes the LICQ requirement. It appears too difficult to further relax the stabilizability requirement.\vspace*{-3mm}

The value of Theorem \ref{thm:adjointAttractivity} lies in leveraging strict dissipativity assumptions to answer the open problem of adjoint transversality conditions for infinite-horizon OCPs, which dates back to the seminal observations of \cite{Halkin74a}. Therein, Halkin observed that the usual finite-horizon transversality condition---given by $\lambda^\star(T,x_0) = 0$ in  the absence of a Mayer term---does not carry over to the infinite-horizon case. 
Theorem \ref{thm:adjointAttractivity} closes this gap by utilizing strict dissipativity of OCPs to derive an asymptotic adjoint transversality condition for infinite-horizon optimal control problems via the steady state adjoint $\bar\lambda$. It is worth noting that from the dissipativity and turnpike point of view \citep{epfl:faulwasser15h,Trelat15a}---especially considering the concept of exponential turnpike properties---this adjoint attractivity is quite natural. 

\textit{Stability of the optimality system.~}
Next, we focus on the lower half of Figure \ref{fig:overview} which is concerned with the stability properties of primal and dual variables. We remark that global asymptotic stability of optimal steady state for the Hamiltonian optimality system \eqref{eq:NCO} (modulo removing the algebraic constraint $0=H_u$ by substitution) has been studied by  \cite{Brock76a}, see also \citep[Chapter 4.3 and Theorem 4.4]{Carlson91}. Therein stability for maximization problems is established using $-\lambda^\top x$ as a Lyapunov function and by imposing definiteness assumptions on the Hessian of the maximized Hamiltonian. In contrast our analysis does not require such	 assumptions. 
Moreover, note that $W$ from \eqref{eq:W} is similar in construction to Lyapunov functions for practical stability used for economic MPC in \citep{epfl:faulwasser15g,Gruene17a}.

\subsection{Links to existing results \vspace*{-2mm}}
\textit{Link to turnpike and dissipativity results.~}
Beyond the illustration in Figure \ref{fig:overview} our results complete a picture of dissipativity implications for OCPs which has been triggered by investigations of economic MPC schemes \citep{Angeli12a,Mueller14a,Gruene16a} in discrete-time and  \citep{epfl:faulwasser15h} in continuous-time. Turnpike properties have originally been observed in OCPs arising in economics; they refer to a similarity property of parametric OCPs, where for varying initial conditions and varying horizons the optimal solutions spend most of their time close to the optimal steady-state (a.k.a.  \textit{the turnpike}), see \citep{Dorfman58,Mckenzie76,Carlson91,tudo:faulwasser21b}.
Importantly, our results complement the analysis of (near) equivalence of turnpike and dissipativity properties for OCPs \citep{Gruene16a,epfl:faulwasser15h} to the aspect of infinite-horizon stability. 
We remark that in contrast to the practical stability result by \cite{Gruene17c}, we show  local asymptotic stability.

\textit{Dissipativity and economic MPC.~}
In light of the results presented above, the stability analysis for economic MPC schemes conducted in  \citep{kit:faulwasser18e_2,kit:zanon18a}, which is  based on linear terminal penalties (i.e. Mayer terms) of the form $V_f(x) = \bar\lambda^\top x$, can be understood quite directly. As we have shown $V_\infty(\bar x) = 0$ (Lemma \ref{lem:Vinf0}) and $\nabla V_\infty(\bar x) = \bar\lambda$, hence $V_f(x) = \bar\lambda^\top x$ can be interpreted as a first-order approximation of the cost-to-go.  We remark that this simple trick closes the gap between practical stability \citep{Gruene13a,epfl:faulwasser15g} and asymptotic stability  \citep{kit:faulwasser18e_2,kit:zanon18a} in dissipativity approaches to economic MPC.
A geometric interpretation as gradient correction of the stage cost $\ell(x,u)$ (i.e. the Lagrange term) has been given by \cite{Zanon16a}.

\textit{Dissipation inequalities and the HJBE.~}
Instead of the PMP one could as well employ the Hamilton-Jacobi-Bellmann Equation (HJBE) to solve the OCP \eqref{eq:OCP} at hand. We next sketch the relation between the dissipation inequalities \eqref{eq:diss} and \eqref{eq:strDI} and the HJBE. For simplicity suppose that the mixed input-state constraints \eqref{eq:OCP_con} reduce to pure input constraints defined via the set $\mbb{U}\subseteq \Rnu$. Under suitable differentiability assumptions on  $V_T(t, x)$---which we now write with two arguments  $t$ indicating initial time of the horizon $[t, T]$ and $x$ the initial condition---the HJBE is given by 
\begin{equation} \label{eq:HJBE}
 0= -\nabla_t V_T(t,x) - \inf_{u \in \mbb{U}}\left( \ell(x,u) + \nabla_x V_T^\top f(x,u)\right),  \tag{HJBE}
\end{equation}
where due to the absence of a Mayer term we have $V_T(T, x) = 0$.
In integral form we may write
\[
V_T(0, x_0) - V_T(T, x^\star(T)) = \int_0^T \ell(x^\star(t), u^\star(t)) \mathrm{d}t.
\]
Comparison with \eqref{eq:diss} shows that
\[
S(x^\star(T)) - S(x_0) \leq V_T(0, x_0) - V_T(T, x^\star(T)). 
\]
Recall the boundary condition $V_T(T, x) = 0$, hence we see that any storage function $S$ defines a lower bound on the optimal value function $V_T$. \vspace*{-3mm}

Suppose on some domain $\mcl{X}\subseteq\Rnx$ there exists a differentiable storage function $S$. Then the differential inequality from above can be written as
\begin{equation} \label{eq:subHJBEI}
  -\inf_{u \in \mbb{U}}\left( \ell(x,u) + \nabla_x S^\top f(x,u)\right) \leq 0.  
\end{equation}
In other words, any differentiable storage function defines a subsolution of \eqref{eq:HJBE}. Moreover, for all $x$, let the superdifferential of $S$ be denoted by $D^+ S(x)$ and \eqref{eq:subHJBEI} holds in the following sense
\begin{equation} \label{eq:subHJBEII}
 -\inf_{u \in \mbb{U}}\left( \ell(x,u) + \xi^\top f(x,u)\right) \leq 0,\quad \forall \xi \in D^+ S(x),
\end{equation}
then the storage function $S$ constitutes a viscosity subsolution of \eqref{eq:HJBE}, see \citep{Liberzon12}. Likewise, any bounded infinite-horizon viscosity subsolution of the HJBE will also constitute a storage function. 
Given the impact of viscosity solutions of the HJBE on optimal control theory, see e.g. \citep{Crandall84a,Bardi08a}, it is fair to ask for further links between storage functions and viscosity solutions. Moreover, recalling that controllability plays a pivotal role in establishing existence of continuous storage functions \citep{Polushin05a},  the link between viscosity solutions and storage functions might provide a road towards characterization of further regularity properties of the latter. 

\textit{Inverse optimality, feedback, and dissipativity.~}
The close interplay between dissipativity, stability, and optimal feedback design has been observed already by \cite{Moylan73a,Hill76a}, while \citep{Willems71a} appears to be the very first work in this direction.
Specifically, Willems shows that dissipativity is pivotal in analysing linear-quadratic infinite-horizon optimal control via characterization of the algebraic Riccati equation and related LMIs. 
Moreover, \cite{Moylan73a} show that under certain smoothness assumptions  passive output feedback for input affine systems is optimal with respect to a specific objective functional with essentially quadratic structure, while \cite{Freeman96a} discuss the inverse optimality problem of a given feedback. Recently, there have been extensions towards input quadratic systems \citep{Sassano19a}. These approaches have in common that they rely heavily on the existence of an appropriate differentiable solution to some associated HJBE. \vspace*{-3mm}

Our results differ as they do not provide analytic optimal feedbacks. However, they are similar in the sense that we discuss the stability \eqref{eq:sys} under optimal infinite-horizon controls. Moreover, our approach also includes constraints. Essentially, our results can be understood as generalization of \citep{Willems71a} to non-linear systems. Actually, we do not leverage the algebraic Riccati equation but include the adjoints in the analysis, which is only touched upon by Willems. This way, we also do not require the presence of a term $u^\top R u$ with $\det R \not = 0 $ in the objective, which in turn is pivotal in \citep{Willems71a}. Indeed our analysis also allows for non-quadratic stage costs $\ell$. Observe that Theorem \ref{thm:VinfBnd} is a generalization of Theorem 1 in \citep{Willems71a}. Finally, Theorem \ref{thm:asympStab}  presents a fairly general construction of a Lyapunov function for the non-linear infinite-horizon optimally controlled system.

\section{Conclusions} \label{sec:conclusion}

This paper has studied stability and dissipativity properties of infinite-horizon continuous-time optimal control problems with respect to primal and dual variables, i.e. with respect to inputs, states and adjoints. We have shown that strict dissipativity implies local exponential stability of infinite-horizon optimal solutions. We also derived converse statements, i.e. conditions under which stability of optimal solutions implies dissipativity.  Hence, the present paper can be considered an extension of the classical work by  \cite{Willems71a}. \vspace*{-2mm}

With respect to the adjoint variables the present paper addresses the issue of adjoint transversality conditions in infinite-horizon OCPs, which had been raised by \cite{Halkin74a}. Specifically, we have proven that strict dissipativity implies a natural adjoint characterization via the steady-state Lagrange multiplier.  We also established a formal equivalence between the gradients of the infinite-horizon optimal value function and any differentiable storage function at the optimal steady state, which is again characterized by the  steady-state Lagrange multiplier. 
Finally, this paper has put its results in perspective to recent developments on turnpike theory, on economic MPC, and on viscosity solutions of Hamilton-Jacobi-Bellmann Equations. \vspace*{-2mm}

\begin{ack} \vspace*{-2mm}                              
The authors acknowledge the very helpful comments of the anonymous reviewers. 
TF acknowledges financial support by the Daimler and Benz Foundation. \vspace*{-2mm}
\end{ack}

\renewcommand*{\bibfont}{\scriptsize}
\bibliographystyle{plainnat}        
\bibliography{literature_latin1}           

\end{document}